\documentclass[
  aps,
  pra,
  reprint,
  superscriptaddress,
  nofootinbib,
  longbibliography
]{revtex4-2}

\usepackage{xcolor}
\definecolor{blueviolet}{rgb}{0.2, 0.2, 0.6}
\definecolor{webgreen}{rgb}{0,.5,0}
\definecolor{webbrown}{rgb}{.6,0,0}
\usepackage[pdftex,
	bookmarks=false,
	colorlinks=true, 
	urlcolor=webbrown, 
	linkcolor=blueviolet, 
	citecolor=webgreen,
	pdfstartpage=1,
	pdfstartview={FitH},  
	bookmarksopen=false
	]{hyperref}

\usepackage{amsmath,amssymb,amsfonts}
\usepackage{mathtools}
\usepackage{bm}
\usepackage{braket}
\usepackage{amsthm}

\usepackage{graphicx}
\usepackage{booktabs}
\usepackage{array}
\usepackage{subcaption}
\usepackage{tikz}
\usetikzlibrary{arrows.meta,positioning,calc,fit}

\usepackage[T1]{fontenc}
\usepackage{microtype}

\usepackage{algorithm}
\usepackage{algpseudocode}

\newcommand{\CZ}{\operatorname{CZ}}
\newcommand{\CNOT}{\operatorname{CNOT}}
\newcommand{\cG}{\mathcal G}

\newtheorem{theorem}{Theorem}
\newtheorem{lemma}[theorem]{Lemma}
\newtheorem{proposition}[theorem]{Proposition}
\newtheorem{corollary}[theorem]{Corollary}
\theoremstyle{definition}
\newtheorem{definition}[theorem]{Definition}

\begin{document}

\title{(A Variant of) Clifford Circuit Synthesis is NP-Complete}

\author{Luna Lima Keller}
\author{Richard Kueng}
\affiliation{Department for Quantum Information and Computation at Kepler (QUICK), Johannes Kepler University Linz, Austria
}

\date{\today}

\begin{abstract}

Optimal circuit synthesis is the problem of finding the shortest-depth circuit representation of a given functionality with respect to a pre-specified elementary gate set. This is a central problem in both quantum and classical hardware design. While the classical version is very well understood – both in terms of heuristics and rigorous hardness assertions – much less is known about optimal quantum circuit synthesis.

We focus on optimal Clifford circuit synthesis, for which various heuristics are known, e.g. via reduction to 3-SAT. Our main result supplies a matching hardness result: a variant of optimal Clifford synthesis is NP-hard. The proof proceeds in two parts: (i) reduce 3-edge colorability on 3-regular graphs to a circuit synthesis problem that only involves CZ gates, (ii) prove that the availability of additional elementary Clifford gates -- most notably: Hadamard, phase and CNOT -- cannot lead to further improvements of the optimal circuit depth. 
Our work sharpens the complexity-theoretic understanding of Clifford circuits: simulation and equivalence checking are in P, whereas deciding whether a Clifford unitary admits an implementation within a prescribed depth is NP-complete.
\end{abstract}

\maketitle


\begin{figure*}[t]
    \centering
    \includegraphics[
        width=\textwidth
    ]{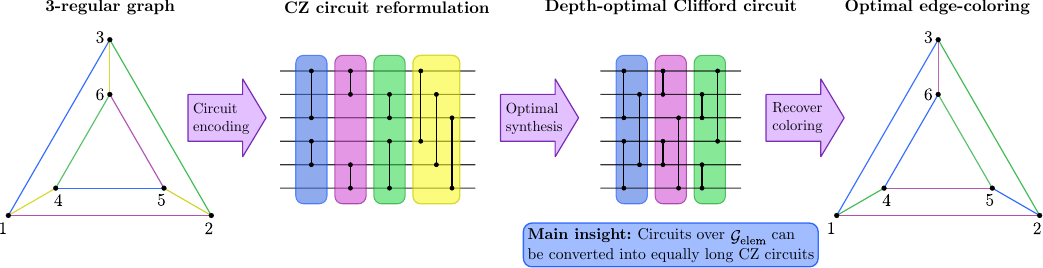}
    \caption{\emph{Caricature of the Karp reduction from 3-regular graph edge coloring to depth-optimal Clifford circuit synthesis.} A 4-coloring of the initial graph with $n=6$ vertices is included for illustration.
    }
    \label{fig:illustration}
\end{figure*}

\section{Introduction}

Decomposing a given logical functionality into a sequence of elementary gates is known as \emph{optimal circuit synthesis}. A solution is (depth-)optimal if it is impossible to achieve the desired
functionality with even shorter circuits. Optimal circuit synthesis is a
central problem in classical hardware design and the quantum version arises naturally when mapping a high-level quantum circuit functionality to elementary gate instructions.

 It should also come as no surprise that optimal circuit synthesis is a hard problem.
Already the simpler task of deciding whether two circuits involving $n$ (qu-)bits are equivalent is a daunting task, because their outputs may have to be compared for exponentially many inputs (in $n$). In the case of (classical) Boolean circuits,  equivalence checking is known to be co-NP complete~\cite{arora2009}. Optimal synthesis is then even harder, because you optimize depth while relying on equivalence checking as a vital subroutine. Formally, optimal Boolean circuit synthesis is contained in the second level of the polynomial hierarchy. And it is also a complete problem for this complexity class~\cite{UMANS2001,BUCHFUHRER2011}. 

For quantum circuits, things become even more involved. 
Approximate equivalence checking is known to be QMA-complete~\cite{Janzing2005,Ji2009}, while exact identity testing is complete for the class Nondeterministic Quantum Polynomial-time (NQP)~\cite{Tanaka2010}. Optimal circuit synthesis then not only asks for approximately reproducing a given functionality, but for minimizing circuit depth. 
While we don't have as clear a grasp on the hardness of this problem as in the classical case, certain hardness assertions are known: In some settings, optimizing different metrics of Clifford+$T$ circuits is NQP-hard, while the corresponding decision problems lie in $\mathrm{NP}^{\mathrm{NQP}}$~\cite{vandewetering2024,kjelstrøm2026}. 
In more restricted settings, sharper results have been obtained: various CNOT synthesis problems are known to be NP-complete,~\cite{Amy2019,Jiang2020,li2026,acuaviva2026} and a large body of work has developed exact and heuristic algorithms for CNOT circuit synthesis~\cite{debrugiere2021,murphy2023,markov2008,prasad2006,schaeffer2014,Webster2026}.

Here we focus on \emph{Clifford circuits} which, 
play a pivotal role in quantum error correction, fault-tolerant quantum computing, measurement-based quantum computation, randomized benchmarking and more~\cite{Lidar2013,raussendorf2001,magesan2012,huang2020,Linden2021,Burgholzer2021}.
Clifford circuit synthesis has also received substantial attention~\cite{Aaronson2004,bravyi2022,Peham2023,shaik2025,Webster2026,amy2022,winderl2023}.
What makes them interesting in the context of computational complexity is the Gottesman-Knill theorem: the action of Clifford circuits on stabilizer states can be efficiently simulated on a classical computer~\cite{gottesman1998,gidney2021,brandl2026}. This, in turn, gives rise to efficient classical algorithms that check whether two Clifford circuits are equivalent. 
Ref.~\cite{Peham2023} used such an efficient Clifford equivalence checking subroutine to reduce the more involved task of depth-optimal Clifford synthesis to an instance of the satisfiability problem (SAT). So, while the classical optimal circuit synthesis problem is complete for the second level of the polynomial hierarchy, the optimal Clifford circuit synthesis problem is guaranteed to be contained one polynomial hierarchy level lower in NP. This is not only of conceptual interest, but also has practical implications: heuristic SAT solvers perform very well in practice and can thus be employed to find Clifford circuit realizations of optimal depth~\cite{shaik2025}. In a similar vein, Ref~\cite{Lee2026} uses empirical graph coloring solvers to achieve depth optimal synthesis for commuting quantum circuits.

But it could be that such reformulations as SAT or graph coloring artificially inflate the intrinsic difficulty of optimal Clifford circuit synthesis. After all, the latter are known to be NP-hard, while no definitive hardness assertion for optimal Clifford circuit synthesis is currently known. Here, we fill this gap: Theorem~\ref{thm:main} below states that optimal Clifford circuit synthesis is itself NP-hard.

From a practical perspective, our findings lend credence to existing Clifford circuit synthesis heuristics. The intrinsic complexity remains unchanged if we recast this as instances of other NP-hard problems, like SAT~\cite{Peham2023,shaik2025} or graph coloring~\cite{Lee2026}, for which powerful empirical solvers exist~\cite{biere2024,Timo2026}.
From a conceptual perspective, our result completely resolves the computational complexity of an important subclass of quantum optimal circuit synthesis.

\section{Summary of main results}

The main result of this work is a hardness assertion for finding the shortest-depth realization of a given $n$-qubit Clifford circuit. To make definitions precise, we assume that gates on disjoint qubits may be executed in parallel, in a single layer. For a fixed elementary gate set $\cG$, let
\begin{equation*}
 D_{\cG}(U)=\min\{d: U=L_d\cdots L_1\},
 \label{eq:depth}
\end{equation*}
where each layer $L_t$ subsumes a tensor product of gates from $\cG$ with pairwise disjoint supports.

\begin{definition}[Optimal Clifford circuit synthesis] \label{def:synthesis_problem}
     Given a $n$-qubit target Clifford unitary $U$, determine the minimum number of parallel gate layers required to implement $U$ over a fixed elementary gate set $\cG$.
\end{definition}

Note that Definition~\ref{def:synthesis_problem} depends on the type of elementary gate set used. Throughout this work, we will consider the following \emph{elementary Clifford gate set}:
\begin{align}
    \cG_{\mathrm{elem}}
    &=
    \{X,Y,Z,H,S,S^\dagger,\CNOT,\CZ\}.
    \label{eq:gate-set}
\end{align}
Note that this set is larger than the minimal gate set comprised of $H,S$ and CNOT (or CZ). 
Our main result states the following.

\begin{theorem} \label{thm:main}
Optimal Clifford circuit synthesis (Definition~\ref{def:synthesis_problem}) is NP-hard for the elementary gate set $\cG_{\mathrm{elem}}$.
\end{theorem}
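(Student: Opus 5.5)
The plan is a Karp reduction from 3-edge-colorability of 3-regular (cubic) graphs, which is NP-complete by Holyer. Given a cubic graph $G=(V,E)$ with $n=|V|$, I would place one qubit per vertex and form the target
\begin{equation*}
U_G=\prod_{\{u,v\}\in E}\CZ_{uv}.
\end{equation*}
The claim to prove is that $D_{\cG_{\mathrm{elem}}}(U_G)\le 3$ if and only if $G$ is 3-edge-colorable. Since $U_G$ is written down in polynomial time, this yields NP-hardness of the decision version, and hence of Definition~\ref{def:synthesis_problem}.

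The easy direction works as follows. The CZ gates pairwise commute. A proper 3-edge-coloring partitions $E$ into three perfect matchings, and each matching is a layer of CZ gates on disjoint qubits, which gives depth $3$. Conversely, if $U_G$ has a circuit using only CZ gates in $d$ layers, then each layer is a matching. CZ is diagonal and self-inverse, so the product over all layers equals $\prod_{e}\CZ_e^{m_e}$, where $m_e$ counts how often edge $e$ appears. This diagonal operator determines the set of pairs with odd $m_e$, and that set must equal $E$. Each vertex has degree $3$, and each layer touches a given vertex at most once, so every vertex meets an edge of $E$ in each of the three layers. Hence the layers restricted to $E$ are three perfect matchings covering $E$, which is a 3-edge-coloring. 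This settles the CZ-only problem.

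The main obstacle is showing that the extra gates $X,Y,Z,H,S,S^\dagger,\CNOT$ cannot produce depth $\le 3$ when $G$ is not 3-edge-colorable. My first attempt is a counting and locality argument in the symplectic (tableau) picture. Each layer acts on the Pauli group, and by a light-cone argument, a depth-$d$ circuit in which vertex $v$ takes part in at most $d$ two-qubit gates can spread $Z_v$ to at most $d+1$ qubits. Under $U_G$ we have $X_v\mapsto X_vZ_{N(v)}$, which has weight $4$. So every qubit must take part in a two-qubit gate in each of the three layers, and every layer must be a perfect matching of two-qubit gates from $\{\CNOT,\CZ\}$ interleaved with single-qubit Cliffords. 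Moreover, the pairs touched by the gates of $v$ must reach $N(v)$ exactly, since there is no slack in the light cone. I would then argue that the set of interacting pairs across the three layers equals $E$ (possibly with repetitions that would destroy reachability), so the three matchings again partition $E$ and form an edge-coloring.

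The delicate point is excluding routes where a path through intermediate qubits realizes an edge indirectly. For example, $v$ could interact with $w\notin N(v)$ in layer $1$, and $w$ could then interact with $N(v)$ later. I would rule this out by tracking both $X_v\mapsto X_vZ_{N(v)}$ and $Z_v\mapsto Z_v$. The image of $Z_v$ must stay on qubit $v$, which forces every two-qubit gate on $v$ to act, up to local Cliffords, as a "controlled" gate that leaves the $Z_v$-type operator invariant. Combined with the backward light cone of $X_v$, this should pin the interaction graph to $G$ itself. If this local case analysis becomes unwieldy, my fallback is to pad the construction, for instance by attaching ancilla qubits or duplicating edges, so that rigidity becomes easier to certify.
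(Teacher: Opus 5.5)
Your reduction (Holyer's cubic 3-edge-colorability, one qubit per vertex, $U_G=\prod_{uv\in E}\CZ_{uv}$, threshold depth $3$) is the paper's, and your CZ-only argument is correct. The gap is in the step that carries all the weight: showing that $H$, $S$ and $\CNOT$ cannot push a non-3-edge-colorable $G$ down to depth $3$.

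\textbf{The light-cone bound is false.} You claim that $v$ in at most $d$ two-qubit gates spreads $Z_v$ to at most $d+1$ qubits. But a partner's later gates also enlarge the support, which can reach $2^d$ qubits. For example, $\CNOT_{w\to v}$ in layer 1 followed by $\CNOT_{a\to v}$ and $\CNOT_{c\to w}$ in layer 2 maps $Z_v\mapsto Z_vZ_wZ_aZ_c$. That is weight $4$ after two layers, with $v$ in only two gates. So the weight of $X_vZ_{N(v)}$ does not force every qubit into a two-qubit gate in every layer.

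\textbf{That conclusion is true only a posteriori.} Establishing it is essentially the whole problem. Configurations such as $H_t$, $\CNOT_{c\to t}$, $H_t$ (which equals $\CZ_{ct}$ while $t$ sits in one two-qubit gate), or a pair $\CNOT_{a\to b}$ in layers 1 and 3, must be excluded by argument, not by assumption.

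\textbf{Global invariance of $Z_v$ gives nothing gate by gate.} A layer-1 $\CNOT_{a\to b}$ sends $Z_b\mapsto Z_aZ_b$, and a matching layer-3 copy undoes it. So your rigidity step (``should pin the interaction graph to $G$'', with padding as a fallback) is a conjecture, not a proof.

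The paper closes this gap with a structural chain, and a Pauli-weight bound appears only at the end, where it is valid.
\begin{enumerate}
\item A Hadamard-free layer cannot map an $X/Z$ product basis $\mathcal B_R$ to a different $\mathcal B_T$. Tracking $\mathcal B_Z$ through $L_3L_2L_1$ then gives $R_1\triangle R_2=R_3$. Hence Hadamards either cancel across adjacent layers or conjugate a middle-layer gate into another gate of $\cG_{\mathrm{elem}}$ on the same support.
\item In the resulting Hadamard-free circuit, the GF(2) linear parts satisfy $A_3A_2A_1=I$ with $A_t=I+N_t$. The matching structure of each layer forces $N_1N_2=0$. So middle-layer CNOTs cancel with a neighbour, and the outer layers carry identical CNOTs.
\item Take a surviving $C=\CNOT_{a\to b}$, cancel its layer-3 copy, and conjugate $X_b$. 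Via $CU_G$ one gets $X_bZ_{N(b)}$, of weight $1+\deg(b)=4$. Via $R_3L_2L_1$ one gets weight at most $3$: layer 1 fixes $X_b$, the CNOT-free middle layer adds at most one qubit, and $R_3$ at most one more. This is where 3-regularity enters.
\item With only $\CZ$ and $X,Y,Z,S,S^\dagger$ left, the quadratic phase needs at least $|E|=3n/2$ CZs. Three layers hold at most $3n/2$ two-qubit gates, so the layers are perfect CZ matchings forming an edge coloring.
\end{enumerate}
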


This means that every other problem in the complexity class NP can be reduced to an instance of optimal Clifford circuit synthesis with at most polynomial overhead. This nicely complements prior work that showed that (the decision version\footnote{
Can one synthesize a given $n$-qubit Clifford circuit with depth $d$?} of) optimal Clifford circuit synthesis is itself contained in NP~\cite{Peham2023}. The following corollary is now an immediate consequence and fully resolves the computational complexity of this restricted class of optimal quantum circuit synthesis problems. 

\begin{corollary}
A variant of optimal Clifford circuit synthesis is NP-complete.
\end{corollary}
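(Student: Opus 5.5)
The corollary follows by combining Theorem~\ref{thm:main} with the known membership result. We take the decision version: given a Clifford unitary $U$ on $n$ qubits, specified by its stabilizer tableau or by a polynomial-size Clifford circuit, and an integer $d$ written in unary or bounded by a polynomial in $n$, decide whether $D_{\cG_{\mathrm{elem}}}(U)\le d$. This is the ``variant'' the statement refers to. The bound on $d$ matters. If $d$ could be exponentially large, a witness circuit of depth $d$ would not be a polynomial-size certificate. We can avoid the restriction anyway, since every $n$-qubit Clifford has a circuit of depth $O(n)$ over $\cG_{\mathrm{elem}}$ (for instance via standard canonical forms). So any $d$ exceeding that bound is a trivial yes-instance, and we may assume $d=O(n)$ without loss of generality.

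\textbf{Membership in NP.} The certificate is the list of $d$ layers, each a tensor product of gates from $\cG_{\mathrm{elem}}$ on disjoint supports. This has size $O(nd)=\mathrm{poly}(n)$. The verifier proceeds as follows. It checks disjointness within each layer. It computes the tableau of $L_d\cdots L_1$ with the Gottesman--Knill update rules in time $O(n\cdot nd)$. It then compares the result with the tableau of $U$. Tableaux determine Cliffords only up to global phase, so we either define equality modulo phase (the natural convention) or additionally track the phase, which is also efficient. This is exactly the argument cited from Ref.~\cite{Peham2023}.

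\textbf{NP-hardness.} Theorem~\ref{thm:main} gives a polynomial-time Karp reduction from 3-edge-colorability of 3-regular graphs to this decision problem. We need to confirm that the reduction produces instances in the same encoding: a tableau computable in polynomial time from the graph (a product of CZ gates along the edges), together with a threshold $d=3$, or a small constant related to the coloring. Both properties should hold by construction.

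The only step requiring care is fixing the problem variant, namely the encoding of $U$, the size of $d$, and equality up to phase, so that membership and hardness refer to the same language. Once this is fixed, NP-completeness follows immediately.
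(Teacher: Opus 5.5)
Your proposal is correct and follows the paper's own argument. The paper also obtains the corollary by combining the NP-hardness of Theorem~\ref{thm:main}, via the map $G\mapsto(U_G,\,d=3)$ on 3-regular graphs, with the NP-membership of the decision version from Ref.~\cite{Peham2023}. Your care about the encoding of $U$, the polynomial bound on $d$, and the global-phase convention spells out what the paper leaves implicit. The steps in the paper's proof of Theorem~\ref{thm:main} do not depend on a global phase, so the modulo-phase convention you propose is compatible with the hardness argument.
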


The proof of Theorem~\ref{thm:main} proceeds in two steps.

First, we relate \emph{graph edge coloring} to a very restricted instance of optimal Clifford circuit synthesis that only involves $\CZ$ gates.
A proper edge coloring of a graph $G$ is a color assignment on its edges such that any pair of incident edges have different colors. The minimum number of colors is known as the edge chromatic number $\chi'(G)$. Computing this value is NP-hard, even for the more restricted class of degree-three graphs~\cite{Holyer1981}.
The main ingredient relating edge coloring to a Clifford circuit is as follows: First, associate every vertex to a qubit, then for every edge in $G$ we include a $\CZ$ gate between the corresponding qubits. We denote this unitary by 
\begin{equation}
    U_G
    =
    \prod_{uv\in E(G)}\CZ_{uv},
    \label{eq:UG-intro}
\end{equation}
and refer to Figure~\ref{fig:illustration} for a visual illustration. 
Our first technical contribution relates the edge chromatic number of $G$ to the optimal circuit depth of $U_G$.

\begin{proposition}[Edge chromatic number equals optimal depth of CZ-circuits] 
\label{prop:edge-chromatic-number}
The edge chromatic number of any graph $G$ can be computed by determining the optimal depth of a CZ-circuit implementing $U_G$: $\chi' (G)=D_{\CZ}(U_G)$.
\end{proposition}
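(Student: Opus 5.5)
The proof consists of two inequalities: an upper bound $D_{\CZ}(U_G)\le\chi'(G)$ from an explicit construction, and a lower bound $D_{\CZ}(U_G)\ge\chi'(G)$ that turns any CZ-circuit for $U_G$ into a proper edge coloring. Throughout I use the standard facts that all $\CZ$ gates are diagonal in the computational basis, commute, and satisfy $\CZ_{uv}^2=I$, with $\CZ_{uv}=\CZ_{vu}$.

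\emph{Upper bound.} Fix a proper edge coloring of $G$ with $\chi'(G)$ colors. Each color class $E_c$ is a matching, so the gates $\{\CZ_{uv}: uv\in E_c\}$ have pairwise disjoint supports and form one admissible layer $L_c$. Because all $\CZ$ gates commute, $L_{\chi'(G)}\cdots L_1=\prod_{uv\in E(G)}\CZ_{uv}=U_G$. Hence $D_{\CZ}(U_G)\le\chi'(G)$.

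\emph{Lower bound.} Take any depth-$d$ circuit $U_G=L_d\cdots L_1$ in which every layer consists of $\CZ$ gates on disjoint pairs. The circuit is a product of commuting $\CZ$ gates, so it equals $\prod_{\{u,v\}}\CZ_{uv}^{m_{uv}}$, where $m_{uv}$ counts the occurrences of the pair $\{u,v\}$ across all layers. Using $\CZ^2=I$, this reduces to $\prod_{\{u,v\}: m_{uv}\ \mathrm{odd}}\CZ_{uv}$.

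The key step is uniqueness: if $\prod_{e\in A}\CZ_e=\prod_{e\in B}\CZ_e$ for edge sets $A,B$, then $A=B$. To show this, I will evaluate both operators on the computational basis state $\ket{x}$ with $x=1_u+1_v$. Each side produces the phase $(-1)^{[uv\in A]}$ respectively $(-1)^{[uv\in B]}$, because exactly one pair of qubits carries two ones. Equivalently, the phase of $\prod_{e\in A}\CZ_e$ on $\ket{x}$ is $(-1)^{\sum_{uv\in A}x_ux_v}$, and the quadratic form determines $A$. Note that the equality in Definition~\ref{def:synthesis_problem} is exact, with no global phase. If the paper's setting allows equality up to global phase, the same argument still works by comparing with $\ket{0\cdots0}$, which both sides fix with phase $+1$.

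Uniqueness gives $\{uv: m_{uv}\ \mathrm{odd}\}=E(G)$. So every edge $uv\in E(G)$ appears in at least one layer. Color each edge by the index of one layer containing it, say the first. Two incident edges sharing a vertex cannot lie in the same layer, since gates within a layer have disjoint supports, so this coloring is proper. It uses at most $d$ colors, so $\chi'(G)\le d$, and minimizing over circuits gives $\chi'(G)\le D_{\CZ}(U_G)$.

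The main obstacle is the lower bound, specifically making sure the extra gates in an optimal circuit cannot help. These are gates on non-edges, or repeated gates that cancel in pairs. The uniqueness lemma together with the "choose one occurrence" coloring handles both: non-edge gates are simply ignored, and repeated gates only enlarge the set of layers we can choose from.
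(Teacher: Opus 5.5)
Your proposal is correct and follows essentially the same route as the paper. The upper bound $D_{\CZ}(U_G)\le\chi'(G)$ uses one layer per color class. The reverse inequality uses the parity lemma: only the parities $m_{uv}$ matter, and they are fixed by the phase on $\ket{x}$ with $x=1_u+1_v$. You then keep one occurrence of each edge's gate, so the layer labels give a proper coloring with at most $d$ colors. Your explicit weight-two evaluation and your global-phase remark simply make the paper's appeal to ``uniqueness of these coefficients'' more concrete.
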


Since computing edge-chromatic numbers is NP hard, this in turn establishes NP-hardness for a restricted case of optimal Clifford circuit synthesis, where the elementary gate set is restricted to $G_{\CZ}= \left\{\CZ\right\}$. 
The next step is to extend (a variant of) this hardness assertion to the full elementary gate set of the Clifford group. The following statement shows that this is possible if we restrict attention to 3-regular graphs.

\begin{proposition}[No shortcuts with more Clifford gates] 
\label{prop:no-shortcuts}
Let $G$ be a 3-regular graph. Then, $D_{\mathrm{elem}}(U_G)=D_{\CZ}(U_G)$. 
\end{proposition}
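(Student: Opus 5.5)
Since $\CZ\in\cG_{\mathrm{elem}}$, we immediately get $D_{\mathrm{elem}}(U_G)\le D_{\CZ}(U_G)=\chi'(G)$ by Proposition~\ref{prop:edge-chromatic-number}. For a 3-regular graph Vizing's theorem gives $\chi'(G)\in\{3,4\}$. So it suffices to prove two statements:
\begin{itemize}
\item[(a)] $D_{\mathrm{elem}}(U_G)\ge 3$ for every 3-regular $G$;
\item[(b)] if $D_{\mathrm{elem}}(U_G)=3$, then $G$ admits a proper 3-edge-coloring.
\end{itemize}
Together these force $D_{\mathrm{elem}}(U_G)=\chi'(G)$. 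Throughout I will work in the Heisenberg picture, using the stabilizer tableau of $U_G$: conjugation fixes $Z_u\mapsto Z_u$ and sends $X_u\mapsto X_u\prod_{v\in N(u)}Z_v$. The image of $X_u$ therefore has support exactly $N[u]=\{u\}\cup N(u)$, of size $4$.

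For (a) I would use a light-cone argument. In a layered circuit of depth $d$ built from one- and two-qubit gates, the Heisenberg image of a single-qubit Pauli on $u$ is supported in the forward light cone of $u$. In layer $t$ each qubit meets at most one partner, so the cone at most doubles per layer and has size at most $2^d$. Depth $1$ gives size at most $2<4$, which is impossible. For depth $2$ the cone has size at most $4$, so it must equal $N[u]$ exactly, and this must hold for every $u$. Concretely, $u$ is paired with some $a\in N(u)$ in layer $1$, and then $u$ and $a$ are paired with the other two neighbours in layer $2$. Since $a$ is also paired with $u$ in layer 1, the same constraint applied to $a$ forces $N[a]=N[u]$. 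Hence every component of $G$ is $K_4$. For $K_4$ I would rule out depth $2$ directly: enumerate the few matching patterns and check that no choice of Clifford two-qubit gates reproduces the tableau. A cleaner route is to note that the $X_u$-images would then have to be generated with only two two-qubit gates touching each qubit, while all three $Z_v$ factors must appear; I would settle this by a small Clifford-tableau computation.

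For (b), fix a depth-3 circuit for $U_G$. The goal is to show that its layers can be converted into three perfect matchings of $E(G)$. The intended strategy has three steps:
\begin{enumerate}
\item Show that every qubit $u$ is acted on by a two-qubit gate in each of the three layers.
\item Show that the three partners of $u$ are exactly its three neighbours in $G$.
\item Colour edge $uv$ by the layer in which the gate on $u,v$ acts.
\end{enumerate}
The tool is a refined light cone. Step~2 would give every vertex three distinct partners, one per layer, and those partners are exactly its three $G$-neighbours; so every edge of $G$ is realised by exactly one gate, and the three layers become perfect matchings of $E(G)$, i.e.\ a proper 3-edge-colouring. To get there, I would exploit two facts. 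First, the image of $X_u$ contains a nontrivial Pauli on each $v\in N(u)$. Second, the images of $X_u$ and $X_v$ must commute for all $u,v$ and must reproduce the symmetric pattern $Z$-on-neighbours. A direct interaction is needed because, if $u$ and $v$ are not paired in some layer, the $Z_v$ factor must travel through an intermediate qubit $w$. That uses up two of $w$'s three layers. A counting argument over the whole graph, using 3-regularity (each vertex needs three ``units'' of interaction and has exactly three layers), should then show that such detours cannot be afforded globally.

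The main obstacle is step~2 of (b): ruling out circuits where $\CNOT$s together with $H$ and $S$ gates route correlations indirectly, so that the gate-interaction graph differs from $G$ while the tableau still matches. Two-qubit Cliffords are not all diagonal, so the clean support-counting must be complemented by an argument showing that any ``relay'' through a third qubit leaves a residual Pauli on that qubit, contradicting the exact support $N[u]$. What I would try first is a local analysis on the backward light cone of the pair $\{u,v\}$, using the commutation constraints between the images of $X_u$ and $X_v$. If that becomes unwieldy, the fallback is to normalise the circuit: push single-qubit gates to the boundaries and use the fact that $U_G$ is diagonal to reduce each two-qubit gate to $\CZ$ up to local Cliffords.
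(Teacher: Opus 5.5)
Your reduction to (a) and (b) is sound, and (a) can be completed along your lines. The depth-two light cone does force $K_4$ components, though one more line is needed: with $N[u]=N[a]=\{u,a,b,c\}$, the layer-1 partner $d$ of $b$ satisfies $N[d]=N[b]\ni u$, so $d\in N[u]\setminus\{u,a,b\}$, i.e.\ $d=c$. The $K_4$ case then needs no tableau search. Every qubit carries a two-qubit gate in both layers, so there are no single-qubit gates. The linear parts satisfy $A_2A_1=I$ with $A_1,A_2$ supported on disjoint matchings, so all four gates are $\CZ$, and four $\CZ$s cannot give all six edges odd parity.

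The genuine gap is (b), which is the whole content of the proposition. Your Steps 1 and 2 are in fact true, but the proposal only announces them (``should then show'', ``what I would try first''). The tools you name cannot deliver them as described:
\begin{itemize}
\item In depth three the forward light cone of $u$ has room for $8$ qubits. Even if $u$ is idle in one layer, it still has room for $|N[u]|=4$. So support counting forces neither Step 1 nor the absence of relays through a third qubit.
\item The global count of ``units of interaction'' is never made precise.
\item The fallback of pushing single-qubit gates to the boundary does not stay inside layers of $\cG_{\rm elem}$: moving $H_c$ or $S_t$ across $\CNOT_{c\to t}$ produces a two-qubit gate outside the gate set.
\end{itemize}
Light cones and commutation relations also do not distinguish $\cG_{\rm elem}$ from arbitrary two-qubit Cliffords, a setting the paper explicitly leaves open. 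An argument built only from them would therefore have to prove more than the paper does.

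What is missing is a depth-preserving normalisation that exploits the fine structure of $\cG_{\rm elem}$; this is how the paper proceeds.
\begin{itemize}
\item Hadamards: every non-Hadamard gate preserves the computational basis, and a Hadamard-free layer cannot map an $X/Z$ product basis $\mathcal B_R$ onto a different one $\mathcal B_T$. Computing $L_2L_1\mathcal B_Z$ forwards and backwards then gives $R_1\triangle R_2=R_3$ for the Hadamard sets. Hence Hadamards either cancel between adjacent layers or conjugate a middle-layer gate into another gate of $\cG_{\rm elem}$ with the same support.
\item CNOT normal form: write the linear part of each Hadamard-free layer as $A_t=I+N_t$ with $N_t$ supported on a matching. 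Then $A_3A_2A_1=I$ forces $N_1N_2=0$, so middle-layer $\CNOT$s cancel against a neighbour and the outer layers carry identical $\CNOT$s.
\item CNOT elimination: an outer pair $\CNOT_{a\to b}$ is excluded because the image of $X_b$ under $\CNOT_{a\to b}U_G=R_3L_2L_1$ has weight at most $3$, whereas 3-regularity gives weight $1+\deg_G(b)=4$.
\end{itemize}
Only then does a count finish the job. The quadratic phase needs at least $|E|=3n/2$ $\CZ$ gates, while three layers hold at most $3n/2$ two-qubit gates, so the layers are perfect $\CZ$ matchings; this is your Step 3. Padding a shorter circuit with identity layers, the same count also gives (a) for free.
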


Importantly, this restriction to 3-regular graphs does not change NP-hardness of the original edge-coloring task~\cite{Holyer1981}. Theorem~\ref{thm:main} now readily follows from combining both Propositions to conclude $D_{\mathrm{elem}}(U_G)=D_{\CZ}(U_G) = \chi'(G)$ for every 3-regular graph.

\section{Methods}
\label{sec:methods}

This section provides proof sketches for the two main technical contributions of this work. We deliberately put an emphasis on intuition and guidance. Fully fleshed versions of these proofs can be found in the appendix.


\subsection{Edge chromatic number equals optimal depth of $\CZ$-circuits (Proposition~\ref{prop:edge-chromatic-number}) }

Proposition~\ref{prop:edge-chromatic-number} reports an equivalence between the edge chromatic number $\chi' (G)$ of a graph $G$ and the optimal circuit depth of the related unitary $U_G = \prod_{uv\in E(G)}\CZ_{uv}$.


The direction $D_{\CZ}(G) \leq \chi' (G)$ readily follows from assigning each CZ gate to a layer corresponding to the edge coloring: Given an edge coloring $uv\mapsto c(uv)\in\{1,2,3,4\}$, put each $\CZ_{uv}$ in layer $L_{c(uv)}$:
\begin{equation*}
L_t=\prod_{c(uv)=t}\CZ_{uv}.
\end{equation*}
Since edges in the same color class are not incident, no two CZ gates share a qubit and they can be executed in parallel. The converse direction $D_{\CZ}(G) \leq \chi' (G)$ is less straightforward, but follows from the following auxiliary result.

\begin{lemma}
\label{lem:methods-cz-parity}
In any CZ-only realization of $U_G$, the gate $\mathrm{CZ}_{uv}$ occurs an odd number of times if $uv\in E(G)$, and an even number of times otherwise.
\end{lemma}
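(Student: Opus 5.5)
The plan is to exploit two elementary facts: all $\CZ$ gates are diagonal in the computational basis, and each $\CZ_{uv}$ is an involution. A CZ-only circuit $L_d\cdots L_1$ is therefore a product of pairwise commuting operators. We can reorder it freely and cancel pairs, which collapses it to
\begin{equation*}
L_d\cdots L_1=\prod_{\{u,v\}}\CZ_{uv}^{m_{uv}}=\prod_{\{u,v\}:\,m_{uv}\ \mathrm{odd}}\CZ_{uv},
\end{equation*}
where $m_{uv}$ counts how often $\CZ_{uv}$ occurs in the circuit. Here we use that $\CZ_{uv}=\CZ_{vu}$, so each unordered pair is counted once. Writing $F=\{uv: m_{uv}\text{ odd}\}$, the circuit implements exactly $U_F$, the graph unitary of the graph with edge set $F$ on the same vertex set.

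The lemma then reduces to injectivity of the map $F\mapsto U_F$: we must show $U_F=U_{E(G)}$ forces $F=E(G)$. I would compute the diagonal entries. For a basis state $\ket{x}$ with $x\in\{0,1\}^n$, we have $U_F\ket{x}=(-1)^{q_F(x)}\ket{x}$ with $q_F(x)=\sum_{uv\in F}x_ux_v \bmod 2$. Equality of the unitaries means $q_F=q_{E}$ as functions on $\mathbb{F}_2^n$. Evaluating at the indicator vector $x=e_u+e_v$ kills every term except $x_ux_v$, so $q_F(e_u+e_v)=[uv\in F]$. Hence $uv\in F$ if and only if $uv\in E(G)$, for every pair. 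Combined with the definition of $F$, this gives the parity claim.

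One possible subtlety is global phases. If equality is meant only up to a global phase, I would note that $q_F(0)=q_E(0)=0$, which forces the phase to be $1$. I do not expect a genuine obstacle: the main care point is the bookkeeping that reduces the circuit to the multiset of gate occurrences. That reduction relies only on commutativity and $\CZ^2=I$, and the layered structure plays no role in it.
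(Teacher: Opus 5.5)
Your proof is correct and follows essentially the same route as the paper: you collapse the circuit to its odd-multiplicity pairs using commutativity and $\CZ^2=I$, then read off each parity from the phase acquired by the basis input $\ket{e_u+e_v}$. Your extra remark that the all-zeros input pins any global phase to $1$ is a small, valid addition that the paper leaves implicit.
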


\begin{proof}[Proof sketch]
CZ gates commute and cancel in pairs, so their combined action depends only on these parities. Each parity can also be recovered from the unitary: on a $n$-qubit computational-basis input whose only nonzero bits are at $u$ and $v$, only $\mathrm{CZ}_{uv}$ contributes a minus sign.
\end{proof}

Now note that CZ gates always commute with each other and applying the same CZ gate twice cancels its action. This allows us to turn the content of Lemma~\ref{lem:methods-cz-parity} into a potentially shorter CZ-circuit. Indeed, whenever $\mathrm{CZ}_{uv}$ occurs an even number of times, we can commute all these gates through the circuit and have them cancel. If $\mathrm{CZ}_{uv}$ instead occurs an odd number of times, the same strategy leads to exactly one of these gates remaining. It is important to note that this strategy can never increase circuit depth and produces a CZ-only unitary that also implements $U_G$.
Since gates within the same layer cannot overlap, the CZ gates in each layer now correspond to non incident edges in the original graph. So, converting the layer label of each CZ gate into a color of the associated edge must result in a proper edge coloring. This establishes $D_{\CZ}(G) \leq \chi' (G)$. Appendix~\ref{app:cz} contains a formal proof of Lemma~\ref{lem:methods-cz-parity}.

\subsection{No shortcuts with more Clifford gates (Proposition~\ref{prop:no-shortcuts})}

This section establishes the ideas behind the main technical statement of this work: allowing additional Clifford gates cannot reduce the optimal depth of $\CZ$-circuits that arise from graph edge colorings, provided that the underlying graph has degree three. 
In formulas, 
\begin{equation*}
 D_{\rm elem}(U_G)=D_{\rm CZ}(U_G).
 \label{eq:elem-equality}
\end{equation*}
for every 3-regular graph $G$.


First, we note that, by Vizing's theorem~\cite{Vizing1964}, the edge chromatic number of 3-regular graphs $G$ can only assume two values: $\chi'(G) \in \left\{3,4\right\}$. This, together with the trivial inequality $D_{\rm elem}(U_G)\leq D_{\rm CZ}(U_G)$, enables us to consider two cases: 

If a given graph $G$ is such that $D_{\rm elem}(U_G)=4$, we immediately have $\chi'(G) = 4$. In other words, our reduction correctly maps graphs which are not $3$-colorable to unitaries that cannot be implemented in $3$ layers.

We then can focus on proving that every graph with $D_{\rm elem}(U_G)\leq3$ corresponds to 3-colorable graphs, namely, we show that $D_{\rm elem}(U_G)\leq3\Rightarrow D_{\rm elem}(U_G)=D_{\rm CZ}(U_G)$.

\begin{lemma}[Hadamard elimination]
\label{lem:had-elim}
Every realization of $U_G$ of depth at most three over $\mathcal G_{\mathrm{elem}}$ can be made Hadamard-free without increasing depth.
\end{lemma}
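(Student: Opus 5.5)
The plan is to reduce Hadamard elimination to a light-cone argument plus a small number of local rewrites. Fix a realization $U_G=L_3L_2L_1$ over $\mathcal G_{\mathrm{elem}}$; if the depth is smaller, pad with identity layers. In each layer a qubit either takes part in one two-qubit gate ($\CZ$ or $\CNOT$), or carries one single-qubit gate, or idles. Two Heisenberg-picture facts about $U_G$ are the key invariants:
\begin{equation*}
U_G Z_v U_G^\dagger = Z_v,
\qquad
U_G X_v U_G^\dagger = X_v\prod_{w\in N(v)} Z_w ,
\end{equation*}
where $|N(v)|=3$. The first step is to record these together with the standard conjugation rules for the gate set. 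The gates $X,Y,Z,S,S^\dagger$ map $Z$-type Paulis to $Z$-type Paulis up to sign. $\CZ$ and $\CNOT$ map Pauli weight $k$ on their support to weight at most $2$. Only $H$ exchanges $X$ and $Z$.

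The second step is a counting argument showing that almost every qubit is busy with a two-qubit gate in all three layers. Track the forward light cone of $v$ through the three layers: it has size at most $2^3=8$. For the image of $X_v$ to have support exactly $\{v\}\cup N(v)$, the qubit $v$ must be connected by a gate path to each of its three neighbours. I would show that if $v$ is not in a two-qubit gate in some layer $t$, the light cone shrinks enough that reaching three distinct neighbours forces a very rigid pattern. In that pattern the reached qubits must themselves interact in all layers, and their own $X$-images become too large or have the wrong support. Comparing with the $Z_v\mapsto Z_v$ condition should then rule out most configurations. The target conclusion is this: any qubit that carries a single-qubit gate in some layer lies in a constrained local configuration, which can be enumerated because $\deg=3$ and the depth is $3$.

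The third step removes Hadamards in those residual configurations without increasing depth. Use the local identities $H_wH_w=I$, $H_w\CZ_{vw}H_w=\CNOT_{v\to w}$, $H S H\propto S^\dagger H S^\dagger$, and the fact that $H$ conjugates Paulis to Paulis. The aim is to push each $H$ into an adjacent layer where it either cancels against another $H$ on the same qubit, or converts a neighbouring $\CZ$ into a $\CNOT$, which is already a gate of $\mathcal G_{\mathrm{elem}}$ and occupies the same slot. The $Z_v\mapsto Z_v$ invariant forces the net number of Hadamards on each qubit, counted modulo the $X\leftrightarrow Z$ swap, to be trivial. This is what guarantees that they can be paired up and cancelled rather than leaving a stray $H$ at the circuit boundary.

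I expect the main obstacle to be the second step: making the light-cone and weight argument tight enough to exclude, or reduce to a finite list, all layouts in which a qubit idles from two-qubit gates in some layer. Pure reachability is not sufficient, because paths through intermediate qubits can still reach three neighbours. I would therefore first try to combine it with the exact support of the $X_v$ image, namely $\{v\}\cup N(v)$ and no more, applied simultaneously to $v$ and to its neighbours. Only if that fails would I fall back on a case analysis over the position of each Hadamard (first, middle or last layer).
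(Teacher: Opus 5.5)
Your proposal does not yet contain the argument that makes the lemma true. Step~2 is a programme rather than a proof: you note yourself that reachability is insufficient, and nothing is offered that would actually produce the ``constrained local configurations.'' More seriously, the claim that carries the whole lemma is only asserted in Step~3, namely that $U_GZ_vU_G^\dagger=Z_v$ forces an even number of Hadamards on every wire. This does not follow from the invariant alone, because entangling gates can move $X/Z$ information between wires. With $\mathrm{SWAP}$ built from three CNOTs, $\mathrm{SWAP}_{12}H_1\mathrm{SWAP}_{12}H_2=I$ fixes every $Z_v$ but has one Hadamard on each wire. The parity holds only because of the depth-three structure, and you give no way to extract it from that structure. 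Step~3 also skips two necessary checks. First, moving a lone $H$ into an adjacent, occupied layer does not preserve depth; the rewrite $H_w\CZ_{vw}H_w=\CNOT_{v\to w}$ needs Hadamards on the same wire in both outer layers, sandwiching a middle-layer gate. Second, the sandwiched gate must remain in $\cG_{\mathrm{elem}}$, so you must rule out $H_c\CNOT_{c\to t}H_c$, $H_iH_j\CZ_{ij}H_iH_j$ and $HSH$, none of which is elementary. Your identity $HSH\propto S^\dagger HS^\dagger$ does not help, since it costs depth.

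The missing idea is the paper's basis-tracking argument. It yields the parity and these exclusions simultaneously, and it uses neither light cones nor 3-regularity. Write $L_t=M_tH_{R_t}$ with $M_t$ Hadamard-free. Hadamard-free elementary gates send each computational basis vector to a computational basis vector up to phase, so $L_2L_1\mathcal B_Z=M_2\mathcal B_{R_1\triangle R_2}$. On the other hand, $U_G\mathcal B_Z=\mathcal B_Z$ gives $L_2L_1\mathcal B_Z=L_3^{-1}\mathcal B_Z=\mathcal B_{R_3}$. A single Hadamard-free layer maps $\mathcal B_R$ onto some $\mathcal B_T$ only if $R=T$. It must also contain no $S,S^\dagger$ on $R$, no $\CNOT$ from $R$ into its complement, and no $\CZ$ inside $R$; this follows by checking the four two-qubit product bases gate by gate. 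Hence $R_1\triangle R_2=R_3$. Cancelling adjacent Hadamard pairs leaves $U_G=M_3(H_RM_2H_R)M_1$. Every gate of $M_2$ touching $R$ then conjugates to a $\CZ$, $\CNOT$ or Pauli on the same support.

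Incidentally, for 3-regular $G$ your Step~2 target holds in its strongest form. Combine this lemma with the CNOT lemmas and the final count: the rewrites never increase the number of two-qubit gates, and $3n/2$ CZs must survive in three layers of at most $n/2$ two-qubit gates each. So every layer of every depth-three realization is a perfect matching of two-qubit gates, and there are no Hadamards at all. That conclusion, however, is reached through this lemma. Proving it directly with light cones would amount to reproving Proposition~\ref{prop:no-shortcuts} by other means.
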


We describe the argument using $n$-qubit stabilizer bases. These are  joint eigenvectors of maximal commuting families of $n$-qubit Pauli operators. 
A concrete example is the computational basis which is the joint eigenbasis of all $2^n$ Paulis generated by $Z_1=Z \otimes I^{\otimes (n-1)}, Z_2=I \otimes Z \otimes I^{\otimes (n-2)}, \ldots, Z_n=I^{\otimes (n-1)} \otimes Z$. Other bases of this type arise if we replace certain Pauli-$Z$s with $X$ instead. Every such \emph{$X/Z$ product basis} can be produced by starting with the computational basis and applying a single layer of Hadamard and identity gates. Note that there are $2^n$ such bases in total.


We say a basis is preserved by the action of a given Clifford gate, if its vectors are at most permuted and/or multiplied by phases. With the exception of Hadamard gates, all other elementary gates in $\cG_{\rm elem}$ preserve the computational basis. Indeed, CNOT and Pauli gates permute the basis vectors, while the diagonal gates $S,S^\dagger$ and $\CZ$ only add phases.

The first ingredient in our argument is the observation that a single Hadamard-free layer cannot change one $X/Z$ product  basis into a different one.

\begin{lemma}
\label{lem:methods-basis}
If a single Hadamard-free layer maps a $X/Z$ product basis $\mathcal{B}$ to another $X/Z$ product basis $\mathcal{B'}$, then $\mathcal{B}=\mathcal{B'}$.
\end{lemma}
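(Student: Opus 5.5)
We prove the statement by identifying each $X/Z$ product basis with its stabilizer group, and then tracking how a Hadamard-free layer acts on that group by conjugation. Let $\mathcal B$ be the product basis in which qubits in a set $A\subseteq[n]$ carry $X$-type states and the remaining qubits carry $Z$-type states. Its stabilizer group, up to signs, is
\[
\mathcal S_A=\langle X_i : i\in A\rangle\cdot\langle Z_j : j\notin A\rangle .
\]
A Clifford layer $L$ maps $\mathcal B$ to a basis $\mathcal B'$ (up to permutation and phases) if and only if $L\mathcal S_A L^\dagger=\mathcal S_{A'}$ up to signs. So it suffices to show that $L\mathcal S_A L^\dagger=\mathcal S_{A'}$ forces $A'=A$.

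The first step is to use the product structure of $L$. Its gates have disjoint supports, so conjugation by $L$ acts independently on each gate's support. Since $\mathcal S_A$ is itself a tensor product of single-qubit stabilizer groups, $L\mathcal S_A L^\dagger$ decomposes as a product over gate supports, with qubits not touched by any gate left unchanged. The question therefore reduces to a check for each single gate: if $g\in\{X,Y,Z,S,S^\dagger,\CNOT,\CZ\}$ maps a local $X/Z$ product stabilizer group on its support to another local $X/Z$ product group, those two groups must coincide.

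The second step is the single-qubit case. Paulis fix each of $\langle X\rangle$ and $\langle Z\rangle$ up to sign. The gates $S$ and $S^\dagger$ fix $\langle Z\rangle$ and send $\langle X\rangle$ to $\langle Y\rangle$, which is not an $X/Z$ product group. So either the local group is unchanged, or the hypothesis that $\mathcal B'$ is an $X/Z$ product basis is violated.

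The third step is the two-qubit case, handled by enumerating the four local groups $\langle Z_1,Z_2\rangle$, $\langle Z_1,X_2\rangle$, $\langle X_1,Z_2\rangle$ and $\langle X_1,X_2\rangle$ using the conjugation rules. For $\CZ$ one has $X_1\mapsto X_1Z_2$ and $X_2\mapsto Z_1X_2$, with $Z$'s fixed. This gives:
\begin{itemize}
\item $\langle Z_1,Z_2\rangle$ is fixed;
\item $\langle Z_1,X_2\rangle\mapsto\langle Z_1,Z_1X_2\rangle=\langle Z_1,X_2\rangle$ is fixed;
\item $\langle X_1,Z_2\rangle$ is fixed by symmetry;
\item $\langle X_1,X_2\rangle\mapsto\langle X_1Z_2,Z_1X_2\rangle$, which contains no weight-one Paulis and so is not of product form.
\end{itemize}
For $\CNOT$ with control $1$ and target $2$, one has $X_1\mapsto X_1X_2$, $Z_2\mapsto Z_1Z_2$, and $Z_1$, $X_2$ fixed. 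This gives:
\begin{itemize}
\item $\langle Z_1,Z_2\rangle$ and $\langle X_1,X_2\rangle$ are fixed;
\item $\langle Z_1,X_2\rangle$ is fixed;
\item $\langle X_1,Z_2\rangle\mapsto\langle X_1X_2,Z_1Z_2\rangle$, the Bell group, which is not a product group.
\end{itemize}
In every case, a local group that is sent to an $X/Z$ product group is sent to itself.

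The one point needing care is justifying that an $X/Z$ product group on the whole register restricts consistently to each gate's support, so that the local checks can be combined into the global claim. This holds because $L\mathcal S_AL^\dagger$ is a tensor product over the gate supports, and $\mathcal S_{A'}$ is a tensor product over single qubits. Equality of the two tensor-product groups therefore forces equality of their restrictions to each block: compare the elements supported within each block. Hence $A'=A$ and $\mathcal B=\mathcal B'$.
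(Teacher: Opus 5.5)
Your proof is correct and follows the paper's argument. Like the paper, you decompose the layer over disjoint gate supports, note that Paulis fix both local bases while $S,S^\dagger$ send $X$ to $Y$, and enumerate the four two-qubit product bases to find that $\CNOT$ fails only on the $X$-control/$Z$-target basis and $\CZ$ only on the $XX$ basis. The differences are that you track stabilizer groups under conjugation rather than the basis vectors themselves, and you spell out the local-to-global restriction step that the paper leaves implicit.
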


\begin{proof}[Proof sketch]
This follows from tracking the possible action of the action of elementary Clifford gate. For example, $\CNOT_{12}$ preserves the two-qubit product bases defined by $\{Z_1,Z_2\}$, $\{X_1,X_2\}$ and $\{Z_1,X_2\}$ as a whole, but it can permute the vectors within each basis. However, its action on the $\{X_1,Z_2\}$ product basis produces an entangled stabilizer basis, which is not a $X/Z$ product basis anymore. Similarly, an $S$ gate preserves the $Z$ eigenvector but exchanges $X$ and $Y$ eigenvectors. None of these operations changes one $X/Z$ product basis into a different $X/Z$ product basis. Similar arguments hold for $\CZ$ and all Pauli operators.
\end{proof}

With this auxiliary result at hand, let us return to the task at hand. We take the envisioned depth-$3$ decomposition of $U_G$ and decompose it further as
\begin{equation*}
 U_G=(M_3H_{S_3})(M_2H_{S_2})(M_1H_{S_1})
\end{equation*}
where for $t=1,2,3$,  $H_{S_t}$ subsumes all Hadamards in layer $t$ and $M_t$ contains all other gates in the same layer. 
Since constituents of the same layer commute  ($[M_t,H_{S_t}]=0$), we can rewrite this as $U_G= M_3 H_{S_3} M_2 H_{S_2} H_{S_1} M_1$.

We now track how the computational basis changes layer by layer. The discussion before Lemma~\ref{lem:methods-basis} ensures that $M_1$ (which is Hadamard-free) preserves this basis, while the central Hadamard layer $H_{S_2} H_{S_1}$ exchanges the local product bases ($X \leftrightarrow Z$) for each qubit on which a Hadamard gate acts nontrivially. The result is therefore another $X/Z$ product basis. Since $U_G$ preserves the computational basis by construction, running the argument backward from the output through the last layer gives another $X/Z$ product basis.
Finally, the remaining action of the $M_2$ layer (which is Hadamard-free) is handled by Lemma~\ref{lem:methods-basis} and must preserve the basis.

Thus, since all layers of the type $M_t$ leaves the bases on which they act invariant, the combined action of the Hadamard layers must also leave the computational basis invariant. In other words, each qubit is acted on by either $0$ or $2$ Hadamard gates.
Each Hadamard therefore either cancels with a copy in an adjacent layer or pairs with a copy in the opposite outer layer to conjugate a single middle-layer gate.
By following which gates are allowed to appear in $M_2$, we can then check exhaustively that each conjugation results in another gate in $\cG_{\rm elem}$. These replacements do not change gate supports. Hence,  they preserve the layer structure and also do not increase depth, which establishes Lemma~\ref{lem:had-elim}. Appendix~\ref{app:hadamard-free} formalizes this part of the argument.

Now that we have eliminated Hadamard gates from an optimal Clifford circuit synthesis of $U_G$, we move on to do the same for CNOT gates. To achieve this goal, we first convert this circuit into a convenient normal form. 

\begin{lemma}[CNOT normal form]
\label{lem:methods-cnot-normal}
A Hadamard-free realization of $U_G$ of depth at most three can be simplified, without increasing depth, so that its middle layer contains no CNOTs and its outer layers contain identical CNOTs with the same orientation.
\end{lemma}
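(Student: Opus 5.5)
We work in the binary symplectic (or, for Hadamard-free circuits, affine/linear) picture. A Hadamard-free Clifford built from $\{X,Y,Z,S,S^\dagger,\CNOT,\CZ\}$ acts on computational basis states as $\ket{x}\mapsto i^{q(x)}\ket{Ax+b}$. Here $A\in GL_n(\mathbb F_2)$ is the product of the CNOT matrices, $b$ comes from Paulis, and $q$ is a phase polynomial. Since $U_G$ is diagonal, the linear parts of the three layers must satisfy $A_3A_2A_1=I$. Each $A_t$ is a product of CNOTs on disjoint pairs, so it is an involution: $A_t^2=I$. The plan is to exploit this rigidity, first to show that the middle layer contains no CNOTs, and then to show that the outer layers match.

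\emph{Step 1: forcing $A_3=A_1$ and $A_2=I$.} From $A_3A_2A_1=I$ we get $A_2=A_3A_1$. I will use the structure of $U_G$ for a 3-regular $G$. Conjugating $Z_u$ stabilizer checks is not enough, since everything is diagonal overall. Instead I will track the phase function: $U_G$ has phase polynomial $\sum_{uv\in E}x_ux_v$ (mod 2, times $\pi$). Diagonal gates in layer $t$ contribute phases evaluated at the intermediate variables $y^{(t)}=A_{t-1}\cdots A_1x$. If some CNOT survives in $A_2$, I will argue by a degree/weight count that the total phase polynomial cannot reduce to the cubic-graph quadratic form. An alternative route is purely combinatorial on $A_2=A_3A_1$: a product of two disjoint-CNOT involutions is itself such an involution only in restricted configurations. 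Its support has to be organized so that each qubit participates in at most one CNOT per layer, while the remaining slots are shared with the CZs.

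\emph{Step 2: rewriting.} Once the relation is understood, I will commute middle-layer CNOTs outward. Write $\CNOT_{ab}$ in $M_2$ and push it into layer 1 or layer 3, using that $\CNOT$ conjugates $\CZ$ and $S$ into products of $\CZ,S,Z$ with the same or smaller support. I also need that the corresponding qubits in the outer layer are free or occupied by a CNOT on the same pair, which then cancels or merges. When two CNOTs on the same pair appear in opposite orientations, I will use $\CNOT_{ab}\CNOT_{ba}\CNOT_{ab}=\mathrm{SWAP}$ reasoning to rule this out: a SWAP-like linear part cannot be undone without exceeding depth 3 given the degree constraints. After this rewriting, $A_2=I$ forces $A_3=A_1^{-1}=A_1$. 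I will then match the two outer layers gate by gate, including orientation, by comparing which pairs they act on, since equal involutions built from disjoint CNOTs have identical gate lists.

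\emph{Main obstacle.} The hard part is Step 1/2: showing that a middle-layer CNOT can always be removed without creating overlapping supports in the outer layers. This needs a case analysis on what occupies qubits $a,b$ in layers 1 and 3 (CZ, CNOT, or single-qubit gates). I would first try an exhaustive local case analysis on the at most 6 qubits touched by the gates meeting $a,b$. 3-regularity is what caps the number of CZ incidences per qubit at three, one per layer, and so forces tight configurations.
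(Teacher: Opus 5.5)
There is a genuine gap. The claim that carries the whole lemma is never proved: every middle-layer CNOT has an identical twin (same pair, same orientation) in layer 1 or layer 3. You list three possible routes and then defer to an exhaustive local case analysis that is not carried out. None of the routes works as sketched:
\begin{itemize}
\item Step 1 leans on 3-regularity, which the lemma does not assume, and for general $G$ its goal is false. On four qubits, $L_1=\CZ_{12}\CNOT_{3\to4}$, $L_2=\CNOT_{3\to4}$, $L_3=I$ realizes $U_G=\CZ_{12}$ with $A_2\neq I$. So $A_2=I$ can only be reached by rewriting; it is not forced.
\item The phase polynomial is irrelevant here, because the CNOT content is constrained by $A_3A_2A_1=I$ alone.
\item Pushing a middle CNOT through neighbouring gates does not preserve depth. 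For example, $\CNOT_{a\to b}\CZ_{bc}\CNOT_{a\to b}=\CZ_{ac}\CZ_{bc}$: the support grows and the two gates share qubit $c$, so they no longer fit in one layer. Likewise, conjugating $S_b$ gives $S_aS_b\CZ_{ab}$, which needs two layers.
\end{itemize}

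The missing idea is a few lines of linear algebra on the CNOT parts. Write $A_t=I+N_t$, where $(N_t)_{ij}=1$ iff $\CNOT_{j\to i}$ lies in layer $t$. Disjointness means each qubit index occurs in at most one nonzero entry of $N_t$, so $N_t^2=0$. From $A_3=A_1A_2$ you get $N_3=N_1+N_2+N_1N_2$. Suppose $(N_1N_2)_{rs}=1$. Then there is a unique $k$ with $\CNOT_{k\to r}$ in layer 1 and $\CNOT_{s\to k}$ in layer 2, and the matching property gives $(N_3)_{rk}=(N_3)_{ks}=1$. That means two CNOTs meet at qubit $k$ in layer 3, which is impossible. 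The opposite-orientation configurations you meant to exclude via SWAP reasoning are covered by the same computation (the case $r=s$). Hence $N_1N_2=0$ and $N_2=N_1+N_3$, so each middle CNOT coincides with exactly one CNOT of an adjacent layer. All other gates in those two layers avoid both of its qubits, so the twins are adjacent in the circuit and cancel outright; no commutation through CZ or $S$ gates is needed. Afterwards $N_2=0$ and hence $N_1=N_3$. Your closing observation, that equal matrices $I+N$ determine identical CNOT lists with orientations, then finishes the proof as in the paper.
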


\begin{proof}[Proof sketch]

A Hadamard-free circuit acts on computational-basis inputs by changing their bit values and accumulating phases. CNOTs mix input bits by XOR, which accounts for a linear transformation, while $X$ and $Y$ may also flip individual bits. Since $U_G$ returns every input string unchanged, the overall mixing of the CNOTs must be trivial.

The proof follows from the following fact: if a qubit participates in two unequivalent CNOTs in the first two layers, then, to undo the resulting linear action on the full bitstring, it would need to participate in at least two more overlapping CNOTs, which is impossible in three layers. Thus every CNOT in the middle layer is removable by composing with a identical CNOT in an adjacent layer.

Then, since the last layer must exactly undo the linear action of the first layer, they are forced to have identical CNOTs, which is the normal form stated in the lemma.
\end{proof}

The next statement uses structural properties of $U_G$ for 3-regular graphs $G$ to remove CNOTs entirely.

\begin{lemma}[CNOT elimination for 3-regular graphs]
\label{lem:cnot-elim}
If $G$ is 3-regular, the normal form of
Lemma~\ref{lem:methods-cnot-normal} contains
no CNOTs.
\end{lemma}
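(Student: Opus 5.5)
The plan is to use the normal form of Lemma~\ref{lem:methods-cnot-normal} and argue by counting. Assume, for contradiction, that some $\CNOT_{ab}$ (control $a$, target $b$) appears in the first and third layers. We then show that a depth-three circuit of this shape cannot generate all the $\CZ$ interactions that a 3-regular graph demands on the pair $\{a,b\}$. The tool is the \emph{phase-polynomial picture} of Hadamard-free Clifford circuits. Each such circuit maps $\ket{x}\mapsto i^{f(x)}\ket{Ax\oplus c}$, where $A$ is an invertible matrix over $\mathbb F_2$, $c$ is a bit vector, and $f$ is a quadratic form modulo $4$.

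For $U_G$ we have $A=I$, $c=0$, and $f(x)=2\sum_{uv\in E(G)}x_ux_v$. Hence the quadratic part of $f$ must contain exactly the monomials $x_ux_v$ for $uv\in E(G)$, each with coefficient $2$ modulo $4$. We read off which monomials the circuit can actually produce.

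\textbf{Step 1 (occupied qubits).} In layers one and three, qubits $a$ and $b$ are used up by $\CNOT_{ab}$. All phase-generating gates touching $a$ or $b$ must therefore sit in layer two. There, each of $a$ and $b$ carries at most one gate: either a single-qubit diagonal gate ($S$, $S^\dagger$, $Z$), a Pauli, or one $\CZ$. Paulis in the middle layer only add linear terms and flip bits. I will check separately that commuting them to the end does not create new quadratic monomials in $x_a,x_b$; I expect this to be routine.

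\textbf{Step 2 (conjugating the middle layer).} We compute how each admissible middle gate on $a$ or $b$ looks after conjugation by $\CNOT_{ab}$, which acts as $x_b\mapsto x_a\oplus x_b$:
\begin{itemize}
\item $\CZ_{ac}$ commutes with $\CNOT_{ab}$ and contributes only the edge $ac$.
\item $\CZ_{bd}$ becomes $\CZ_{bd}\CZ_{ad}$, contributing the edges $bd$ and $ad$.
\item $\CZ_{ab}$ becomes $\CZ_{ab}$ up to a single-qubit phase on $a$.
\item A phase gate on $b$ produces $S_aS_bZ$-type terms together with at most the monomial $x_ax_b$.
\end{itemize}
Because $a$ and $b$ each host at most one middle-layer gate, these contributions cannot be combined freely. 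The quadratic monomials involving $x_a$ or $x_b$ with some \emph{third} vertex therefore number at most three (namely $ac$, $bd$, $ad$), and only when both middle gates are $\CZ$s to outside vertices.

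\textbf{Step 3 (degree count).} Since $G$ is 3-regular, the number of edges from $\{a,b\}$ to the rest of the graph is $6$ if $ab\notin E(G)$ and $4$ if $ab\in E(G)$. Both exceed the bound of $3$ from Step~2, which is the desired contradiction. Hence the normal form contains no $\CNOT$s.

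The main obstacle is the bookkeeping in Step~1. We must make sure that no other gate in the outer layers (Paulis on neighbouring qubits, or $\CNOT$s elsewhere in layers one and three that also conjugate the partner vertex $c$ or $d$) can generate additional monomials involving $x_a$ or $x_b$. My approach is to note that every such gate acts on qubits disjoint from $\{a,b\}$. Conjugating by it substitutes variables only among the other qubits, which can relabel which third vertex appears in a monomial but never increases how many monomials contain $x_a$ or $x_b$. If that relabeling argument turns out to be delicate, the fallback is to track the coefficient of $x_a$ in $f$ as a linear form in the remaining variables. The middle layer contributes at most two independent terms to it, whereas 3-regularity forces a linear form with three distinct neighbours for $a$ and for $b$ simultaneously.
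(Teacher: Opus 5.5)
\paragraph{Gap in the pair count.}
Your Step~2 bound is false once the other outer-layer CNOTs are taken into account. The ``relabeling'' argument you give for the main obstacle is exactly where it breaks: a substitution $x_d\mapsto x_d\oplus x_{d'}$ does not relabel a monomial, it splits it in two.

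Concretely, the normal form permits further gates $\CNOT_{c'\to c}$ and $\CNOT_{d'\to d}$ in both outer layers next to $\CNOT_{a\to b}$. Conjugating the middle gates then gives $\CZ_{ac}\mapsto\CZ_{ac}\CZ_{ac'}$ and $\CZ_{bd}\mapsto\CZ_{bd}\CZ_{bd'}\CZ_{ad}\CZ_{ad'}$. That is up to six monomials pairing $x_a$ or $x_b$ with a third vertex ($ac,ac',ad,ad',bd,bd'$). Six equals the demand in the case $ab\notin E(G)$, so no count over the pair $\{a,b\}$ can produce the contradiction.

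Your fallback has the same defect when applied to $x_a$. Its coefficient is $y_c+y_d$, and these two linear forms together may involve up to four variables, so $a$ can have three neighbours. ``At most two independent terms'' does not bound the number of variables.

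\paragraph{The missing idea: look at the target $b$ alone.}
The variable $x_b$ enters the phase only through the bit $y_b=x_a\oplus x_b$ fed into the middle layer, that is, only through the single middle-layer gate on $b$. The reasons are that layer-one and layer-three phase gates avoid $a,b$, and every other bit seen by layers two and three is free of $x_b$. There are two cases.
\begin{itemize}
\item If that gate is $S_b^{\pm1}$ or $\CZ_{ab}$, the only quadratic monomial containing $x_b$ is $x_ax_b$.
\item If it is $\CZ_{bd}$ with $d\neq a$, it contributes $2(x_a+x_b)\,y_d \bmod 4$. Here $y_d\in\{x_d,\,x_d\oplus x_{d'}\}$, because $d$ lies in at most one layer-one CNOT.
\end{itemize}
Either way $b$ has at most two neighbours in the quadratic form, which contradicts $\deg_G(b)=3$.

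This is the paper's proof in phase-polynomial language. The paper propagates $X_b$ through $\CNOT_{a\to b}U_G=R_3L_2L_1$. It compares the weight $1+\deg_G(b)=4$ with the bound $3$ obtained from $X_b\,R_3Z_dR_3^\dagger$. The factor $R_3Z_dR_3^\dagger$ is exactly where the second neighbour $d'$ appears.
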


\begin{proof}[Proof sketch]

We prove by contradiction: Suppose that $\CNOT_{a\to b}$ is present in both outer layers and consider a variant of the original circuit circuit $W=\CNOT_{a\to b}U_G=\CNOT_{a\to b} L_3 L_2 L_1$, where we cancel out the CNOT between qubit $a$ and qubit $b$ on the very left, but the same CNOT on the very right remains.  By taking advantage of the fact that $C$ fixes $X_b$ by conjugation, we arrive at two different predictions for the Pauli weight of $WX_b W^\dagger$.
The original $CZ$ circuit propagates $X_b$ to all its neighbors: $\operatorname{wt}(WX_b W^\dagger)=1+\deg_G(b)=4$ given that $G$ is 3-regular. On the other hand, by taking advantage of the simple structure of the RHS, the computed weight is \textit{at most} $3$ -- a contradiction.

\end{proof}

After having removed both Hadamard gates (Lemma~\ref{lem:had-elim}) and CNOT gates (Lemma~\ref{lem:cnot-elim}), are now in the final stages of establishing Proposition~\ref{prop:no-shortcuts}: any depth-three implementation of $U_G$ can be converted to a circuit $C$ containing only $\CZ$-gates and single-qubit Clifford gates different from Hadamard. The definition of $U_G$ in Eq.~\eqref{eq:UG-intro} ensures the following action on a $n$-qubit computational basis state $\ket{x_1,\ldots,x_n}$ :
\begin{equation*}
    U_G\ket{x_1,\ldots,x_n}
    =(-1)^{\sum_{\{u,v\}\in E(G)}x_u x_v}\ket{x_1,\ldots,x_n}.
\end{equation*}
Note that each CZ gate contributes a single term to the quadratic phase. Also, single-qubit gates cannot contribute quadratically to this overall phase. hence, each term in the quadratic sum must correspond to an actual CZ gate appearing in the circuit $C$. Since the underlying graph is 3-regular, it must have $3n/2$ edges in total. Also, a single circuit layer may contain at most $n/2$ (non-overlapping) CZ gates and we know that there is exactly 3 such layers. Hence, a simple counting argument forces the final depth-three circuit to populate all layers with exclusively CZ-gates. This establishes the claim in Proposition~\ref{prop:no-shortcuts}: $D_{\rm elem}(U_G)=D_{\rm CZ}(U_G)$.

\section{Synopsis}

In this work we have shown that optimal $n$-qubit Clifford circuit synthesis is, in general, at least as challenging as any other problem in NP, see Theorem~\ref{thm:main}. This complements existing work in the literature: \cite{Peham2023} shows that every optimal Clifford circuit synthesis problem can be mapped to a satisfiability problem with polynomial overhead. Hence, the optimal Clifford circuit synthesis problem is itself contained in NP.

We establish NP-hardness of optimal Clifford circuit synthesis via reduction from graph edge coloring. The proof proceeds in two steps and heavily relies on structural properties of (very) short Clifford circuits. We first converted edge coloring of a graph with $n$ vertices to an instance of optimal $n$-qubit circuit synthesis with only CZ gates. This was the content of Proposition~\ref{prop:edge-chromatic-number}. Then, we restricted our attention to 3-regular graphs, where the edge chromatic number is known to be either $3$ or $4$. Subsequently, we showed that additional Clifford gates -- most notably Hadamards, CNOT and phase gates -- cannot provide any synthesis shortcuts over CZ circuits. At least if the circuit in question is already very short and there is an underlying graph structure present, see Proposition~\ref{prop:no-shortcuts}.

This proof technique, at least in its current form, depends on the type of elementary gate set used for the Clifford group. As displayed in Eq.~\eqref{eq:gate-set}, we include CZ gates as a primitive depth-one gate. This differs slightly from implementations using only $H,S,S^\dagger,\CNOT$ and Pauli gates as primitives. SAT-based optimal Clifford circuit synthesis frameworks can be extended to gate sets including CZ~\cite{Peham2023}, but translating the present reduction to a library without primitive CZ requires some modifications. A natural extension of the current result is to count an arbitrary one- or two-qubit Clifford as a primitive gate. The present proof uses the finer elementary structure of $H,S,\CNOT$, and CZ and therefore does not immediately extend to that model. We leave this to future work.

A very similar setting has recently emerged for \emph{matchgate circuits}, another non-universal subclass of quantum circuits that admits efficient classical simulation~\cite{jozsa2008}. A $n$-qubit matchgate unitary can be compactly represented by a $2n\times 2n$ orthogonal matrix. This structure has recently been leveraged for synthesis tasks, e.g., Ref.~\cite{Braccia2025} constructs circuits for sampling Haar-random fermionic linear-optics unitaries with optimal depth and gate count, while Ref.~\cite{casas2026,casas2026optimal} addresses optimal exact matchgate synthesis, reducing this problem to SAT, in a similar vein to Ref.~\cite{Peham2023}. These results mirror the Clifford setting closely, suggesting that the structural ideas developed here may also be adaptable to the matchgate setting. This could result in a rigorous proof of NP-hardness for optimal matchgate synthesis.

\subsection*{Acknowledgments}

We thank Diego García-Martín for inspiring discussions and feedback.
This work has been supported by  the Austrian Science Fund (FWF) via the START award q-shadows (713361001) and the SFB BeyondC
(10.55776/FG7) (research group 7), as well as the European Research Council via the ERC Starting grant q-shadows (101117138).

\bibliography{ref}

\cleardoublepage
\appendix

\section{Optimal synthesis of CZ-only circuits}
\label{app:cz}

Here we prove that we can always extract an edge-coloring from a CZ-circuit implementing $U_G$. The first step is Lemma~\ref{lem:methods-cz-parity}, which we restate for convenience.

\begin{lemma}
In any CZ-only realization of $U_G$, the gate $\mathrm{CZ}_{uv}$ occurs an odd number of times if $uv\in E(G)$, and an even number of times otherwise.
\end{lemma}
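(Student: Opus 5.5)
Any CZ-only circuit is a product of diagonal gates, so the plan is to show that it acts on computational basis states by a quadratic phase determined by the gate multiplicities modulo two. We then compare this phase with that of $U_G$ on weight-two inputs.

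Fix a CZ-only realization $C=L_d\cdots L_1$ of $U_G$. For each unordered pair $\{u,v\}$ of distinct qubits, let $m_{uv}\in\mathbb{N}$ be the number of occurrences of $\CZ_{uv}$ across all layers. Since $\CZ_{uv}=\CZ_{vu}$, orientation does not matter. Every $\CZ_{uv}$ is diagonal in the computational basis and acts as $\CZ_{uv}\ket{x}=(-1)^{x_ux_v}\ket{x}$. The product of diagonal operators is obtained by multiplying their eigenvalues, so for every $x\in\{0,1\}^n$
\begin{equation*}
C\ket{x}=(-1)^{\sum_{u<v} m_{uv}x_ux_v}\ket{x}.
\end{equation*}
By the same reasoning, $U_G\ket{x}=(-1)^{\sum_{u<v}A_{uv}x_ux_v}\ket{x}$, where $A_{uv}=1$ if $uv\in E(G)$ and $A_{uv}=0$ otherwise.

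Because $C=U_G$ as operators, with no global phase freedom, the two phase functions agree for every input $x$. We evaluate both on the input $x=e_u+e_v$, whose only nonzero bits sit at qubits $u$ and $v$. On this input every monomial $x_ax_b$ vanishes except $x_ux_v=1$, so the identity reduces to $(-1)^{m_{uv}}=(-1)^{A_{uv}}$. This says $m_{uv}\equiv A_{uv}\pmod 2$, which is exactly the claim: $m_{uv}$ is odd for edges and even for non-edges.

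No step here is a serious obstacle. The one point needing care is that realization means exact equality of unitaries, so no global phase can be absorbed. Even if a global phase $e^{i\phi}$ were permitted, it would be harmless: evaluating on $x=0$ gives $e^{i\phi}=1$ for both circuits, so $\phi=0$ and the argument above goes through unchanged.
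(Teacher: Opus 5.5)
Your proposal is correct and follows the paper's argument: both write the CZ circuit's computational-basis action as the quadratic phase $(-1)^{\sum_{u<v} m_{uv}x_ux_v}$ and compare coefficients with $U_G$. Your evaluation on the weight-two input $x=e_u+e_v$ is exactly the coefficient-recovery step the paper's sketch describes, and your global-phase check via $x=0$ is a valid extra safeguard.
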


\begin{proof}
    
Consider any CZ circuit of depth $d$ implementing $U_G$ as defined in Eq.~\eqref{eq:UG-intro}. Since CZ gates commute and square to the identity, its action is determined solely by the parity $m_{uv}$ of the number of occurrences of each gate $\CZ_{uv}$. There is a one-to-one correspondence between unitaries $U_G$ and mappings $uv\mapsto m_{uv}\in \{0,1\}$. One way to see this is to evaluate the circuit on computational-basis states $\ket{x}=\ket{x_1,\ldots,x_n}$:
\begin{align*}
    U_G \ket{x}=&(-1)^{\sum_{uv\in E(G)}x_u x_v}\ket{x} \quad \text{but also} \\
    U_G\ket{x}= &(-1)^{\sum_{u<v} m_{uv}x_u x_v}\ket{x}.
\end{align*}
Here, the first equation corresponds to the original gates in Eq.~\eqref{eq:UG-intro} and the second equation describes the action of the depth-$d$ circuit on the computational basis.

Each odd occurrence of $\CZ_{uv}$ contributes with a quadratic term $x_ux_v$. The uniqueness of these coefficients forces odd multiplicity on graph edges and even multiplicity when no edge connects two given vertices.

\end{proof}

If we retain one occurrence of each graph edge and delete every other gate, then each remaining layer is still valid and the layer labels form a proper edge coloring using at most $d$ colors. Hence $\chi'(G)\le d$.

This establishes the exact correspondence between CZ circuits implementing $U_G$ and edge colorings, without assuming that an optimal circuit merely reorders the supplied gates, which is the content of Proposition~\ref{prop:edge-chromatic-number}.

\section{Hadamard-free optimal circuits}
\label{app:hadamard-free}

We now begin the sequence of statements needed for proving that optimal Clifford circuits over $\cG_{\rm elem}$ are not shorter than CZ-circuits for implementing $U_G$.

Let us start with notation. We regard a basis as an unordered collection of one-dimensional subspaces: phases and permutations of its vectors do not change the basis. Let $\mathcal B_Z$ denote the computational basis. For $S\subseteq[n]$, define
\begin{equation*}
    \mathcal B_S=H_S\mathcal B_Z,
    \qquad
    \text{where }H_S=\prod_{i\in S}H_i.
\end{equation*}
Thus $\mathcal B_S$ is the tensor product basis that uses the $X$ eigenbasis on qubits in $S$ and the $Z$ eigenbasis on the remaining qubits.

\begin{lemma}
\label{lem:coordinate}
Let $M$ be a single layer of Hadamard-free elementary
gates. If
\begin{equation*}
    M\mathcal B_R=\mathcal B_T,
\end{equation*}
then $R=T$.
\end{lemma}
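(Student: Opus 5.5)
The plan is to pass to the Heisenberg picture and read off $R$ from the stabilizer group of the basis. The basis $\mathcal B_S$ is exactly the set of joint eigenvectors of the stabilizer group
\[
\mathcal P_S=\langle X_i \ (i\in S),\ Z_j\ (j\notin S)\rangle .
\]
Two stabilizer bases coincide as collections of lines iff their Pauli groups agree modulo signs. For a Clifford $M$,
\[
M\mathcal B_R=\mathcal B_T \iff M\mathcal P_R M^\dagger=\mathcal P_T \pmod{\pm}.
\]
So it suffices to show the following: if $M$ maps $\mathcal P_R$ onto some $\mathcal P_T$ up to signs, then $T=R$. Since $M$ is a tensor product of gates on disjoint supports, conjugation by $M$ factorizes over those supports. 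I will argue support by support.

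The first step handles single-qubit supports. Pauli gates fix every Pauli up to sign, so they fix each local generator. $S$ and $S^\dagger$ send $Z\mapsto Z$ and $X\mapsto \pm Y$. On a qubit in $R$ carrying $S$, the image group therefore contains $Y_i$, which is not in any $\mathcal P_T$; the hypothesis excludes this. On a qubit in $R^c$ carrying $S$, $Z_i$ is mapped to $Z_i$. In both admissible cases the local label is preserved.

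The second step handles two-qubit supports $\{a,b\}$ carrying $\CZ_{ab}$ or $\CNOT_{a\to b}$. The key observation is that the restriction of $\mathcal P_R$ to $\{a,b\}$, namely $\langle P_a,P_b\rangle$ with $P\in\{X,Z\}$, is mapped by the gate to a subgroup supported on $\{a,b\}$. Because $M$ acts independently elsewhere, this image must be the corresponding two-qubit piece of $\mathcal P_T$. I will check the four label choices. For $\CZ$: $ZZ$ is fixed. With labels $XZ$, the image is $\langle X_aZ_b, Z_b\rangle=\langle X_a,Z_b\rangle$, which is fixed. With labels $XX$, the image $\langle X_aZ_b, Z_aX_b\rangle$ is entangled, hence not of product form, so it is excluded. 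For $\CNOT_{a\to b}$, using $X_a\mapsto X_aX_b$, $Z_b\mapsto Z_aZ_b$, $X_b\mapsto X_b$, $Z_a\mapsto Z_a$: labels $ZZ\mapsto\langle Z_a,Z_aZ_b\rangle$, $XX\mapsto\langle X_aX_b,X_b\rangle$ and $ZX\mapsto\langle Z_a,X_b\rangle$ all return the same labels, while $XZ\mapsto\langle X_aX_b,Z_aZ_b\rangle$ is a Bell basis and is excluded. In every admissible case the labels on $a$ and $b$ are unchanged.

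Combining the supports, the labels agree on every qubit, so $T=R$. The step needing the most care is justifying that the image of $\mathcal P_R$ equals $\mathcal P_T$ support by support. The argument is that conjugation by $M$ maps Paulis supported on a gate's support $\Sigma$ to Paulis supported on $\Sigma$. Consequently, $M\mathcal P_RM^\dagger$ restricted to Paulis supported on $\Sigma$ equals the image of the restriction of $\mathcal P_R$ to $\Sigma$. That restriction is $\langle P_i: i\in\Sigma\rangle$, a group of full local rank, and it must equal the local product group of $\mathcal P_T$ on $\Sigma$, which is $\langle Q_i: i\in\Sigma\rangle$. Comparing these groups is exactly the check performed above.
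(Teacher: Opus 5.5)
Your proof is correct and follows the same route as the paper: a gate-by-gate analysis over disjoint supports that excludes exactly $S/S^\dagger$ on qubits of $R$, $\CZ$ with both endpoints in $R$, and $\CNOT_{a\to b}$ with $a\in R$, $b\notin R$, while every other case preserves the local labels. The only difference is that you run the check in the Heisenberg picture on the stabilizer groups $\mathcal P_S$ rather than on basis vectors, which has the small benefit of making explicit the localization-to-supports step that the paper only asserts.
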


\begin{proof}
Since the gates in $M$ have disjoint supports and each basis consists of fully separable states, we can analyze the action of each possible gate in $M$ on its corresponding subspace.

Let us analyze single-qubit gates first: Pauli gates only permute or add phases to $X$ and $Z$ eigenbasis, preserving both $\{\ket{0},\ket{1}\}$ and $\{\ket{+},\ket{-}\}$. The action of $S$ and $S^\dagger$ preserves the $Z$ basis, but maps $X$ eigenvectors to $Y$ eigenvectors. By the hypothesis of the lemma, $M$ is forced to not having $S$ or $S^\dagger$ acting on any qubit in $R$.

Let us now consider two-qubit gates. Consider $a,b\in\{0,1\}$ and denote $\ket{(-)^{0}}:=\ket{+}$, $\ket{(-)^{1}}:=\ket{-}$. The $\CNOT_{12}$ gate preserves three out of the four possible X/Z product bases:
\begin{align*}
    &\CNOT_{12}|a\rangle|b\rangle
        =|a\rangle|a\oplus b\rangle,\\
    &\CNOT_{12}|a\rangle|(-)^b\rangle
        =(-1)^{a+b}|a\rangle|(-)^b\rangle\rangle,\\
    &\CNOT_{12}|(-)^a\rangle|(-)^b\rangle
        =|(-)^{a+b}\rangle|(-)^b\rangle.
\end{align*}
The remaining basis is mapped to a Bell-type basis of entangled two-qubit states which cannot be equivalent to a X/Z product basis:
\begin{equation*}
    \CNOT_{12}|(-)^a\rangle\otimes|b\rangle
    =
    \frac{|0\rangle|b\rangle
          +(-1)^a|1\rangle|b\oplus1\rangle}
         {\sqrt{2}}.
\end{equation*}
Using the notation established above, these bases are respectively $\mathcal{B}_Z,\mathcal{B}_{\{2\}},\mathcal{B}_{\{1,2\}}$ and $\mathcal{B}_{\{1\}}$. We summarize the above discussion in table \ref{tab:CNOT-CZ}, where we also include analogous conclusions for the action of a  $\CZ$ fate. 

\begin{table}[h]

\caption{Action of $\CNOT_{12}$ and $\CZ$ on X/Z product bases.}
\label{tab:CNOT-CZ}
\begin{tabular}{c|cccc}
 & $\mathcal{B}_Z$ & $\mathcal{B}_{\{1\}}$ & $\mathcal{B}_{\{2\}}$ &  $\mathcal{B}_{\{1,2\}}$ \\
\hline
$\mathrm{CNOT}_{12}$
 & preserved & \textit{entangled} & preserved & preserved \\
$\mathrm{CZ}$
 & preserved & preserved & preserved & \textit{entangled}
\end{tabular}
\end{table}

We now have all ingredients to handle the full $n$-qubit case. Since $M$ maps a X/Z basis to another X/Z basis, then it must not contain a $\CNOT_{a\to b}$ where $a\in R$ and $b\in R^c$, and similarly, it must not contain a $\CZ$ between two qubits in $R$. Since every other option \textit{preserves} each basis, the conclusion follows.

\end{proof}

We now prove that Hadamards can be eliminated completely from the optimal circuit.

\begin{lemma}[Hadamard elimination]
\label{lem:app-had-elim}
Every realization of $U_G$ of depth at most three over $\mathcal G_{\mathrm{elem}}$ can be made Hadamard-free without increasing depth.
\end{lemma}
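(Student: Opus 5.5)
We follow the outline sketched in the Methods section. Take a realization $U_G=L_3L_2L_1$ of depth at most three (padding with identity layers if needed). Split each layer as $L_t=M_tH_{S_t}$, where $H_{S_t}=\prod_{i\in S_t}H_i$ collects the Hadamards of layer $t$ and $M_t$ is the Hadamard-free remainder. Within a layer the two factors act on disjoint qubits, so they commute, and we may write
\begin{equation*}
U_G=M_3H_{S_3}M_2H_{S_2}H_{S_1}M_1 .
\end{equation*}

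\textbf{Tracking bases.} We follow $\mathcal B_Z$ through this product.
\begin{itemize}
\item Every Hadamard-free elementary gate preserves $\mathcal B_Z$, so $M_1\mathcal B_Z=\mathcal B_Z$.
\item The Hadamards on distinct qubits commute and square to one, so $H_{S_2}H_{S_1}=H_{S_1\triangle S_2}$. Hence $H_{S_2}H_{S_1}M_1\mathcal B_Z=\mathcal B_{S_1\triangle S_2}$.
\item From the output side, $U_G$ is diagonal and preserves $\mathcal B_Z$. Applying $M_3^{-1}$ (also Hadamard-free, so it preserves $\mathcal B_Z$) and then $H_{S_3}$ gives $H_{S_3}M_3^{-1}\mathcal B_Z=\mathcal B_{S_3}$.
\end{itemize}
The middle factor therefore satisfies $M_2\mathcal B_{S_1\triangle S_2}=\mathcal B_{S_3}$. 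Lemma~\ref{lem:coordinate} forces $S_1\triangle S_2=S_3$, i.e.\ $S_1\triangle S_2\triangle S_3=\emptyset$, so every qubit carries either zero or exactly two Hadamards.

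The proof of Lemma~\ref{lem:coordinate} also tells us what $M_2$ may contain when it preserves $\mathcal B_R$ with $R=S_3$. It contains no $S$ or $S^\dagger$ on qubits of $R$, no $\CZ$ with both ends in $R$, and no $\CNOT_{a\to b}$ with $a\in R$, $b\notin R$.

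\textbf{Removing the Hadamards.} We treat the three patterns of Hadamard pairs in turn.
\begin{itemize}
\item \emph{Hadamards in layers 1 and 2 on qubit $i$.} Then $i\notin S_3$, and the two Hadamards are adjacent in the rewritten product: $H_{S_2}H_{S_1}$ contains $H_iH_i=I$. Deleting both only frees slots, so depth does not increase.
\item \emph{Hadamards in layers 2 and 3 on qubit $i$.} Both sit in $H_{S_3}M_2H_{S_2}$ with no $M_2$ gate on $i$ in layer 2, because that slot holds the Hadamard. So $H_i$ commutes past $M_2$ and cancels.
\item \emph{Hadamards in layers 1 and 3 on qubit $i$.} The middle-layer gate $g$ touching $i$ (if any) is conjugated as $H_{R'}\,g\,H_{R'}$, where $R'\subseteq R$ is the set of qubits of $g$'s support lying in $S_1\cap S_3$. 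We check the allowed gates case by case.
  \begin{itemize}
  \item Pauli gates: $X\leftrightarrow Z$ and $Y\mapsto -Y$. The global phase is irrelevant, or can be absorbed.
  \item $S$ and $S^\dagger$ cannot occur on $R$.
  \item A $\CZ_{ij}$ with exactly one end $i\in R$ becomes $H_i\CZ_{ij}H_i=\CNOT_{j\to i}$.
  \item A $\CNOT_{a\to b}$ with both ends in $R$ becomes $\CNOT_{b\to a}$.
  \item A $\CNOT_{a\to b}$ with only $b\in R$ becomes $\CZ_{ab}$.
  \end{itemize}
In every case the conjugated gate lies in $\cG_{\mathrm{elem}}$ and has the same support. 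We then delete the outer Hadamards, replace each $g$ by its conjugate, and leave the layer structure unchanged.
\end{itemize}

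\textbf{Expected obstacle.} The delicate step is the bookkeeping when a middle-layer two-qubit gate touches qubits of different Hadamard types. For example, one end may carry the layer-1/layer-3 pair while the other end carries Hadamards in layers 1 and 2, or 2 and 3. That second end cannot then carry a middle gate, since layer 2 holds a Hadamard there. So any two-qubit middle gate has both ends either Hadamard-free or of layer-1/layer-3 type, which is exactly the setting of the case check above.

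Finally, the first two patterns cancel before we conjugate, so we should order the operations consistently. We first cancel the adjacent pairs using $H_{S_2}H_{S_1}=H_{S_1\triangle S_2}$ (and similarly on the left). We then handle the layer-1/layer-3 pairs via $H_R M_2 H_R$ with $R=S_1\cap S_3$, and reassemble the result into three layers.
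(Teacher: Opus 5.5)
Your proposal is correct and follows essentially the same route as the paper. You split each layer as $M_tH_{S_t}$, track $\mathcal B_Z$ forward and backward to get $M_2\mathcal B_{S_1\triangle S_2}=\mathcal B_{S_3}$, force $S_1\triangle S_2=S_3$ via Lemma~\ref{lem:coordinate}, cancel adjacent Hadamard pairs, and remove the layer-1/layer-3 pairs by conjugating the middle layer using only the basis-preserving cases of Table~\ref{tab:CNOT-CZ}; your explicit observations (no middle-layer gate on a qubit with a layer-2 Hadamard, and the sign in $HYH=-Y$) just make precise what the paper compresses into ``collapsing Hadamards on neighboring layers.''
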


\begin{proof}

Suppose
\begin{equation*}
    U_G=L_3L_2L_1
\end{equation*}
is a depth-three realization, where each $L_t$ is a layer of pairwise disjoint gates drawn from $\mathcal G_{\rm elem}$. For each $t\in\{1,2,3\}$, let $R_t\subseteq[n]$ be the set of qubits in which $L_t$ contains a Hadamard gate. Since the gates within one layer have pairwise disjoint supports, no other gate in $L_t$ acts on a qubit in $R_t$. We may therefore write
\begin{equation*}
    L_t=M_tH_{R_t},
    \qquad \text{where} \qquad
    H_{R_t}:=\prod_{i\in R_t}H_i,
    \label{eq:layer_hadamard_decomposition}
\end{equation*}
and $M_t$ is a Hadamard-free layer disjoint from $H_{R_t}$.

Note that every Hadamard-free elementary gate preserves the computational basis as a set. Therefore
\begin{align*}
    L_2L_1\mathcal B_Z&=(M_2H_{R_2})(M_1H_{R_1})\mathcal B_Z\\
    &=M_2(H_{R_2}H_{S_1})M_1\mathcal B_Z\\
    &=M_2H_{R_1\triangle R_2}\mathcal B_Z\\
    &=M_2\mathcal B_{R_1\triangle R_2},
\end{align*}
where $\triangle$ denotes the symmetric difference of sets $A\triangle B=(A\setminus B)\cup (B\setminus A)$.

Likewise, since $U_G$ preserves $\mathcal B_Z$,
propagating backward through the last layer gives
\begin{equation*}
    L_2L_1\mathcal B_Z
    =L_3^{-1}\mathcal B_Z
    =\mathcal B_{R_3}.
\end{equation*}
Combining these observations with Lemma~\ref{lem:coordinate} implies
\begin{equation*}
    R_1\triangle R_2 =R_3.
\end{equation*}
This ensures that every qubit carries an even number of Hadamards across the three layers.

By collapsing Hadamards on neighboring layers,
we can assume that the middle layer is Hadamard free, and we have
\begin{equation*}
    R_1=R_3=:R,
    \qquad
    R_2=\varnothing.
\end{equation*}
The circuit therefore takes the form
\begin{equation*}
    U_G=
    (M_3 H_R) M_2 (H_R M_1)=
    M_3
    \left(
        H_R M_2 H_R
    \right)
    M_1.
    \label{eq:remaining_hadamards}
\end{equation*}

We now show that $H_R M_2 H_R$ can be converted to a single layer of gates in $\cG_{\rm elem}$ by computing the conjugation case by case. Since $M_2$ satisfies the hypothesis of Lemma \ref{lem:coordinate}, the only possibilities we must check are the preserved cases in Table \ref{tab:CNOT-CZ}. For example, a $\CNOT_{c\to t}$ may be conjugated by $H_t$ or $H_cH_t$, while conjugation by $H_c$ is forbidden. Repeating the analogous argument for $\CZ$, the relevant conjugation rules are:
\begin{align*}
    H_t\mathrm{CNOT}_{c\to t}H_t
        &=\mathrm{CZ}_{ct},\\
    H_cH_t\mathrm{CNOT}_{c\to t}H_cH_t
        &=\mathrm{CNOT}_{t\to c},\\
        H_j\mathrm{CZ}_{ij}H_j
        &=\mathrm{CNOT}_{i\to j}.
\end{align*}
The Pauli conjugation rules also remain in the same elementary gate set:
\begin{equation*}
    HXH=Z,\qquad HZH=X,\qquad HYH=-Y,
\end{equation*}
thus $H_SM_2H_S$ can be converted to a single layer of non-Hadamard elementary gates.
    
\end{proof}

\section{CNOT normal form for Hadamard-free depth-three circuits}
\label{app:cnot_normal_form}

In this section we prove that any optimal circuit implementing $U_G$ over $\cG_{\rm elem}$ can be converted to a circuit with no Hadamard and CNOT gates, without changing its depth.

By Lemma~\ref{lem:app-had-elim}, we may consider a Hadamard-free realization $U_G=L_3L_2L_1$. Since each layer contains no Hadamard gate, their action on computational-basis states is
\begin{equation*}
    L_t\lvert x\rangle
    =
    \mathrm{i}^{p_t(x)}
    \lvert A_tx+b_t\rangle,
    \label{eq:hadamard_free_basis_action}
\end{equation*}
where the binary invertible matrix $A_t$ is induced by the CNOT gates, the constant term $b_t$ is induced by $X$ and $Y$
gates, and $p_t(x)\in\{0,1,2,3\}$ records the phase induced by $Z,Y,S,S^\dagger$ and CZ gates.

Decompose the linear part as $A_t=I+N_t$, where $I$ denotes the identity matrix. Since the CNOTs of a layer act on disjoint pairs of qubits, each non-zero entry $(N_t)_{ij}$ corresponds to a $\CNOT_{j\rightarrow i}$ in layer $t$. Also, we have $A_t^2=I$ and $N_t^2=0$, since CNOT is its own inverse and inside each layer each pair of qubits is acted upon by at most one CNOT.

The target $U_G$ leaves every computational-basis string unchanged, which implies the CNOT layers must satisfy
\begin{equation*}
    A_3A_2A_1=I.
    \label{eq:three_layer_linear_identity}
\end{equation*}

We now restate and give a proof of the normal form in Lemma~\ref{lem:methods-cnot-normal}.

\begin{lemma}[CNOT normal form]
\label{lem:app-cnot-normal}
A Hadamard-free realization of $U_G$ of depth at most three can be simplified, without increasing depth, so that its middle layer contains no CNOTs and its outer layers contain identical CNOTs with the same orientation.
\end{lemma}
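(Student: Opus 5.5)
The plan is to work entirely with the linear parts $A_t=I+N_t$ over $\mathbb F_2$ and the identity $A_3A_2A_1=I$. Each layer is its own inverse ($A_t^2=I$), so this identity can be rewritten as
\begin{equation*}
    A_2=A_3A_1=I+N_3+N_1+N_3N_1 .
\end{equation*}
Recall the structure of each $N_t$. It has at most one nonzero entry in every row and every column. Moreover, no qubit is both a control and a target within the same layer, so a single layer contains no directed path of length two. The idea is to read off from this entrywise equation which CNOTs may appear in the middle layer, and then to remove them by local cancellations.

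\emph{Step 1 (middle CNOTs have an adjacent twin).} Suppose $\CNOT_{c\to t}$ lies in layer 2, so $(N_2)_{tc}=1$. Then $(N_3+N_1+N_3N_1)_{tc}=1$. This leaves three cases. Either $\CNOT_{c\to t}\in L_1$, or $\CNOT_{c\to t}\in L_3$, or the entry comes from a product $(N_3)_{tk}(N_1)_{kc}=1$, i.e.\ a chain $\CNOT_{c\to k}\in L_1$, $\CNOT_{k\to t}\in L_3$ with $k\notin\{c,t\}$. I will exclude the chain case. In that case $A_3A_1$ contains, besides the entry $(t,c)$, the entries $(k,c)$ and $(t,k)$. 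These come from $N_1$ and $N_3$ and cannot be cancelled, since $N_1$ and $N_3$ each have at most one nonzero entry per column and per row. Hence column $c$ of $N_2$ would contain two ones, $(k,c)$ and $(t,c)$. That contradicts the matching structure of a single layer.

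I expect this to be the main obstacle. The difficulty is the bookkeeping of all possible overlaps between $N_1$, $N_3$ and $N_3N_1$, including when $N_3$ and $N_1$ share an entry, so that a term cancels mod 2. I would handle it by fixing the column $c$ and listing the at most three nonzero contributions to it. The key facts used are that the supports of $N_1$ and $N_3$ are matchings with no length-two paths inside a single layer.

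\emph{Step 2 (cancellation).} Now suppose the middle $\CNOT_{c\to t}$ coincides with a $\CNOT_{c\to t}$ in $L_1$ (respectively $L_3$). In $L_1$ this CNOT occupies qubits $c,t$ exclusively, and likewise in $L_2$. Since gates within a layer have disjoint supports, both factors can be pulled out next to each other and cancel ($\CNOT^2=I$). Deleting both keeps every layer a valid layer, and the depth does not increase. We iterate this over all middle CNOTs; the process terminates because the number of CNOTs strictly decreases. Throughout, the circuit remains Hadamard-free and still implements $U_G$.

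\emph{Step 3 (outer layers agree).} Once $N_2=0$, we have $A_3A_1=I$, so $A_3=A_1^{-1}=A_1$. This gives $N_3=N_1$. Because $(N_t)_{ij}=1$ exactly when $\CNOT_{j\to i}\in L_t$, the two outer layers contain the same CNOTs with the same orientation. This is the claimed normal form.
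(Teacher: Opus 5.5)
Your proof is correct and is essentially the paper's argument. The paper solves $A_3A_2A_1=I$ as $N_3=N_1+N_2+N_1N_2$ and rules out the product term because a length-two CNOT path would force two gates on a shared qubit in layer 3. You instead solve for $N_2=N_1+N_3+N_3N_1$ and get the clash in layer 2; after that, the cancellation step and the conclusion $N_3=N_1$ are the same as in the paper. One small point to tighten: in your chain case, $(N_3)_{kc}=0$ holds because $k$ is already a control in layer 3 (the no-length-two-path property you stated), not because of the one-entry-per-row/column property on its own.
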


\begin{proof}
From the preceding discussion
\begin{align*}
    A_3
    =
    A_1^{-1}A_2^{-1}
    =
    (I+N_1)(I+N_2),
\end{align*}
and therefore
\begin{equation}
    N_3=N_1+N_2+N_1N_2.
    \label{eq:n3_matrix_relation}
\end{equation}

We claim that $N_1N_2=0$. In fact, suppose otherwise that $(N_1N_2)_{r,s}=1$. Since each CNOT act on disjoint pair of qubits inside each layer, there is a unique index $k$ such that
\begin{equation*}
    (N_1)_{r,k}=1
    \qquad \text{and} \qquad
    (N_2)_{k,s}=1.
    \label{eq:length_two_cnot_path}
\end{equation*}
These entries represent
$\mathrm{CNOT}_{k\to r}$ in layer $1$ and
$\mathrm{CNOT}_{s\to k}$ in layer $2$.
Since the second-layer CNOT already acts on $k$, the matching property
gives $(N_2)_{r,k}=0$. Moreover,
\begin{equation*}
    (N_1N_2)_{r,k}=0,
\end{equation*}
because the unique nonzero entry in row $r$ of $N_1$ is in column
$k$, while $(N_2)_{k,k}=0$. Equation~\eqref{eq:n3_matrix_relation}
therefore implies
\begin{equation*}
    (N_3)_{r,k}=1.
\end{equation*}
By an analogous argument, $(N_3)_{k,s}=1$, which is a contradiction, as it would imply two CNOT gates sharing a qubit in layer $3$. Therefore $N_1N_2=0$.

It follows that
\begin{equation*}
    N_3=N_1+N_2,
    \label{eq:n3_symmetric_difference}
\end{equation*} 
meaning that each CNOT in layer $2$ have a corresponding copy either on layer $1$ or in layer $3$. After canceling each CNOT in layer $2$ with a neighboring one, we have $N_2=0$ and $N_1=N_3$.
\end{proof}

We now assume the graph is $3$-regular to completely eliminate CNOT gates in the optimal circuit.

\begin{lemma}[CNOT elimination for 3-regular graphs]
\label{lem:app-3-regular}
If $G$ is 3-regular, the normal form of
Proposition~\ref{lem:app-cnot-normal} contains
no CNOTs.
\end{lemma}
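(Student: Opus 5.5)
The plan is a proof by contradiction based on Pauli propagation, following the sketch given for Lemma~\ref{lem:cnot-elim}. Suppose the normal form $U_G=L_3L_2L_1$ of Lemma~\ref{lem:app-cnot-normal} contains a CNOT. Then there is some $\CNOT_{a\to b}$ that occurs in both $L_1$ and $L_3$, and $L_2$ is free of both CNOTs and Hadamards. Write $L_3=\CNOT_{a\to b}L_3'$, where $L_3'$ is a layer that acts trivially on qubits $a$ and $b$; this is possible because gates within a layer have disjoint supports. Define
\begin{equation*}
W:=\CNOT_{a\to b}\,U_G=L_3'L_2L_1 .
\end{equation*}
I will compute the Pauli weight of $WX_bW^\dagger$ in two ways and show the results disagree.

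\emph{Left-hand side.} Conjugating by CZ gates gives $U_GX_bU_G^\dagger=X_b\prod_{v\sim b}Z_v$, up to a sign. Next conjugate by $\CNOT_{a\to b}$. It fixes $X_b$. It fixes $Z_v$ for every $v\neq b$, including $v=a$ because $Z_a$ is preserved. The factor $Z_b$ never appears, since $G$ has no loops. So $WX_bW^\dagger=\pm X_b\prod_{v\sim b}Z_v$, and its weight is $1+\deg_G(b)=4$ because $G$ is 3-regular.

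\emph{Right-hand side.} Propagate $X_b$ through $L_1$, then $L_2$, then $L_3'$.
\begin{itemize}
\item The only gate of $L_1$ acting on $b$ is $\CNOT_{a\to b}$, which fixes $X_b$. All other gates of $L_1$ act away from $b$ and leave $X_b$ unchanged.
\item In $L_2$, qubit $b$ carries at most one gate. It is either a Pauli or $S^{(\dagger)}$, which maps $X_b$ to $\pm X_b$ or $\pm Y_b$, or a $\CZ_{bc}$, which produces a Pauli supported on $\{b,c\}$. In every case the support is contained in $\{b,c\}$ for some single qubit $c$.
\item $L_3'$ acts trivially on $b$. It has at most one gate touching $c$, which may be a two-qubit gate $\CZ_{cd}$ or $\CNOT$ between $c$ and some $d$. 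So the final support is contained in $\{b,c,d\}$. If $c=a$, nothing in $L_3'$ acts on $c$ either, and the support is even smaller.
\end{itemize}
Hence $\operatorname{wt}(WX_bW^\dagger)\le 3<4$, a contradiction. Therefore no CNOT can survive in the normal form.

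The main point needing care is the bookkeeping in the right-hand-side computation: the order of conjugation, the disjointness of supports within each layer, and the degenerate case $c=a$. Each of these only shrinks the support, so the bound of $3$ is robust. The 3-regularity of $G$ enters only through the lower count $\deg_G(b)+1=4$, which exceeds the maximum spread of $3$ achievable through the remaining two-and-a-half layers.
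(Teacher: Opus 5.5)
Your proposal is correct and follows the paper's proof essentially step for step: you strip $\CNOT_{a\to b}$ from $L_3$, set $W=\CNOT_{a\to b}U_G=L_3'L_2L_1$, and compare two computations of $WX_bW^\dagger$. The graph side gives weight $1+\deg_G(b)=4$, while pushing $X_b$ through $L_1$ (where the matching CNOT fixes it), the CNOT-free middle layer, and $L_3'$ gives weight at most $3$. Your explicit appeal to the identical orientation of the CNOT in $L_1$ and your handling of the degenerate case $c=a$ are only slightly more careful bookkeeping of the same argument.
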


\begin{proof}
Apply Proposition~\ref{lem:app-cnot-normal} to obtain a Hadamard-free realization
\begin{equation*}
    U_G=L_3L_2L_1
\end{equation*}
whose middle layer contains no CNOTs and whose outer
layers contain identical CNOTs with the same orientations.

We prove by contradicion: Suppose there is a $C:=\mathrm{CNOT}_{a\to b}$ for some pair of qubits in layer $3$. Write $L_3=CR_3$. Since gates within a layer have disjoint supports, $R_3$ acts trivially on both $a$ and $b$. Multiplying the circuit identity on the left by $C$ gives
\begin{equation}
    CU_G=R_3L_2L_1.
    \label{eq:3-regular-cancel}
\end{equation}
We compare how each unitary propagates the single-site Pauli operator $X_b$. Using the LHS, conjugation gives
\begin{align*}(CU_G)X_b(CU_G)^\dagger&=C\left(X_b\prod_{v\in N_G(b)}Z_v\right)C^\dagger\\
    &=X_b\prod_{v\in N_G(b)}Z_v
\end{align*}
where $N_G(b)$ is the neighborhood of $b$. The first equality follows from usual graph state preparation identities and the second follows from the fact that $C$ preserves $X_b$ and $Z_a$, and acts trivially on every other $Z_v$ appearing
in this product. Thus $X_b$ is propagated to another Pauli with weight $1+\deg_G(b)$. 

On the other hand, using the RHS of Eq.~\eqref{eq:3-regular-cancel}, the final propagated Pauli can have weight at most $3$. In fact, since the only gate in $L_1$ that acts on $X_b$ is $C$, we have:
\begin{align}(R_3L_2L_1)X_b(R_3L_2L_1)^\dagger&=R_3(L_2X_bL_2^\dagger)R_3^\dagger
\label{eq:3-regular-circuit-pauli}
\end{align}
Now, the middle layer contains neither Hadamards nor CNOTs, so its only possible actions on $X_b$ are
\begin{equation*}
    L_2X_bL_2^\dagger=
    \begin{cases}
        \pm X_b, & \text{Pauli or identity on }b,\\
        \pm Y_b, & S_b\text{ or }S_b^\dagger,\\
        X_bZ_j,  & \CZ_{bj}.
    \end{cases}
\end{equation*}

In the first two cases, conjugation by $R_3$ leaves the Pauli matrix unchanged, because $R_3$ has no gate acting on qubit $b$. Hence $X_b$ is propagated to a weight-$1$ Pauli operator. In the last case, Eq.~\eqref{eq:3-regular-circuit-pauli} becomes
\begin{equation*}
\bigl(R_3X_bZ_jR_3^\dagger\bigr)=X_b\bigl(R_3Z_jR_3^\dagger\bigr).
\end{equation*}
Since qubit $j$ can only be acted upon a single- or two-qubit gate from the last layer, $R_3Z_jR_3^\dagger$ is supported on at most two qubits. Thus, in every case, $X_b$ is propagated to a Pauli with weight $\leq 3$. Comparing with the previous computation, we must conclude
\begin{equation*}
    1+\deg_G(b)\le 3,
\end{equation*}
and therefore $\deg_G(b)\le 2$. This contradicts the initial assumption that $G$ is 3-regular.
\end{proof}

We now provide the final argument after Lemma~\ref{lem:cnot-elim} in the main text. After applying Lemmas~\ref{lem:app-had-elim}, \ref{lem:app-cnot-normal} and \ref{lem:app-3-regular}, only single-qubit Clifford gates and CZ gates remain. Since at each step the register is in the computational basis, the circuit action may be written as
\begin{equation*}
    \ket{x}\mapsto \mathrm{i}^{\ell(x)}(-1)^{q(x)}\ket{x+b},
\end{equation*}
 where $q(x)$ contains only quadratic terms corresponding exclusively to $CZ$ gates, $Y,Z,S,S^\dagger,\CZ$ gates may contribute to the linear terms $\ell(x)$, and $X,Y$ may contribute to the linear term $b$.

Let $m_{uv}$ denote the total number of occurrences of
$\mathrm{CZ}_{uv}$ in the three layers, and $\overline{m}_{uv}\in\{0,1\}$, its parity. The quadratic part of the circuit phase is therefore
\begin{equation*}
    q(x)
    =
    \sum_{u<v}
    \overline{m}_{uv}x_ux_v.
    \label{eq:circuit_quadratic_part}
\end{equation*}
Equality with $U_G$ implies
\begin{equation*}
    \overline{m}_{uv}
    =
    \begin{cases}
        1, & uv\in E,\\
        0, & uv\notin E.
    \end{cases}
    \label{eq:cz_parity_matches_graph}
\end{equation*}
In particular, the circuit contains at least one controlled-$Z$ gate for every edge of $G$. Since $G$ is 3-regular, and denoting by $N_{\mathrm{CZ}}$ the number of CZ gates,
\begin{equation*}
    N_{\mathrm{CZ}}
    \geq
    \sum_{u<v}m_{uv}
    \geq |E|
    =
    \frac{3n}{2}.
    \label{eq:cz_lower_bound}
\end{equation*}

Since each layer contains at most $n/2$ two-qubit gates, we must have exactly $N_{\mathrm{CZ}}=3n/2$ and no other gate appears. Consequently, $U_G$ is realized as a CZ-only circuit in depth $3$, which concludes $D_{\rm elem}(U_G)=D_{\rm CZ}(U_G)$.

\end{document}